\def \lket {|}
\def \rket {\rangle}
\def \lbra {\langle}
\def \rbra {|}
\newcommand{\ket}[1]{\lket #1\rket}
\newcommand{\bra}[1]{\lbra #1\rbra}
\newcommand{\comment}[1]{}
\renewcommand{\labelitemi}{$\circ$}
\newtheorem{Theorem}{Theorem}
\newtheorem{Lemma}{Lemma}
\newtheorem{Claim}{Claim}
\DeclareMathOperator{\arccot}{arccot}
\begin{document}


\title{Grover's search with faults on some marked elements}
\author{Dmitry Kravchenko, Nikolajs Nahimovs, Alexander Rivosh\thanks{This research was supported by EU FP7 project QALGO (Dmitry Kravchenko, Nikolajs Nahimovs) and ERC project MQC (Alexander Rivosh)}}
\institute{Faculty of Computing, University of Latvia}

\maketitle


\begin{abstract}

Grover's algorithm is a quantum query algorithm solving the unstructured search problem of size $N$ using $O(\sqrt{N})$ queries.
It provides a significant speed-up over any classical algorithm \cite{Gro96}.

The running time of the algorithm, however, is very sensitive to errors in queries. It is known that if query may fail (report all marked elements as unmarked) the algorithm needs $\Omega(N)$ queries to find a marked element \cite{RS08}. \cite{AB+13} have proved the same result for the model where each marked element has its own probability to be reported as unmarked. 

We study the behavior of Grover's algorithm in the model where the search space contains both faulty and non-faulty marked elements.
We show that in this setting it is indeed possible to find one of non-faulty marked items in $O(\sqrt{N})$ queries.

We also analyze the limiting behavior of the algorithm for a large number of steps and show the existence and the structure of limiting state $\rho_{lim}$.

\end{abstract}


\section{Introduction}

Grover's algorithm is a quantum query algorithm solving the unstructured search problem of size $N$ using $O(\sqrt{N})$ queries.
It is known that any deterministic or randomized algorithm needs linear time (number of queries) to solve the above problem. 
Thus, Grover's algorithm provides a significant speed-up over any classical algorithm.

The running time of the algorithm (number of queries), however, is very sensitive to errors in queries.
Regev and Schiff \cite{RS08} have shown that if query has a small probability of failing (reporting that \textit{none} of the elements are marked), then quantum speed-up disappears:
no quantum algorithm can be faster than a classical exhaustive search by more than a constant factor.
Ambainis et al. \cite{AB+13} have studied Grover's algorithm in the model where each marked element has its own probability to be reported as unmarked, independent of probabilities of other marked elements. 
Similarly to the result of \cite{RS08}, they have shown that if all marked elements are faulty (have non-zero probability of failure) then the algorithm needs $\Omega(N)$ queries to find a marked element.

Although, technically the model of \cite{AB+13} allows one non-faulty marked element\footnote{The limitation of at most one non-faulty marked element comes from the probability independence assumption -- two or more marked elements with zero error probability of failure would not be independent.} (element with zero probability of failure) this case was not included into the analysis.

We study the behavior of the algorithm in the model where the search space contains both faulty and non-faulty marked items.
Specifically, we focus on the case where the search space contains multiple non-faulty and one faulty marked element. 
We analyze the effect of a fault on the state of the algorithm and show that in this setting it is indeed possible to find one of non-faulty marked elements in $O(\sqrt{N})$ queries.
Up to the best our knowledge, this is the first demonstrgggggation of query fault modes which can be tolerated by the Grover's algorithm.

We also analyze the limiting behavior of Grover's algorithm for a large number of steps and show the existence and the structure of limiting state $\rho_{lim}$. 
The limiting state is a mixture of a faulty marked element $\ket{i}$ with probability $\frac{1}{3}$, the uniform superposition of all non-faulty marked elements with probability $\frac{1}{3}$ and the uniform superposition of all non-marked elements with probability $\frac{1}{3}$.
Using the approach of \cite{AB+13} one can show that convergence time is $O(N)$.

The contrast between this result and the algorithm finding a marked element in $O(\sqrt{N})$ steps can be explained by the probability of finding a marked element oscillating between $\Omega(1)$ and $o(1)$ until these oscillations finally die off after O($N$) steps.


\section{Technical preliminaries}

We use the standard notions of quantum states, density matrices etc., as described in \cite{KLM07}.


\subsection*{Grover's algorithm \cite{Gro96}}

Suppose we have an unstructured search space of size $N$.
Grover's algorithm starts with a starting state 
$\ket{\psi_{0}}= \frac{1}{\sqrt{N}} \sum_{i=1}^N \ket{i}$. Each step of the algorithm consists of two transformations: $Q$ and $D$. Here, $Q$ is a query to a black box defined by
\begin{itemize}
\item
$Q \ket{i} = - \ket{i}$ if $i$ is a marked element;
\item
$Q \ket{i} = \ket{i}$ if $i$ is not a marked element.
\end{itemize}
$D$ is the diffusion transformation described by the following $N \times N$ matrix:
\[ D = \left( \begin{array}{cccc} 
-1 + \frac{2}{N} & \frac{2}{N} & \ldots & \frac{2}{N} \\
\frac{2}{N} & -1 + \frac{2}{N} & \ldots & \frac{2}{N} \\
\ldots & \ldots & \ldots & \ldots \\
\frac{2}{N} & \frac{2}{N} & \ldots & -1+ \frac{2}{N} \end{array} \right) .\]
We refer to $\ket{\psi_t}= (DQ)^t \ket{\psi_{0}}$ as the state of Grover's algorithm after $t$ time steps.

Grover's algorithm has been analyzed in detail and many facts about it are known \cite{Amb04}.
If there is one marked element $i$, the probability of finding it by measuring $\ket{\psi_t}$ 
reaches $1-o(1)$ for $t=O(\sqrt{N})$. If there are $k$ marked elements, the probability of finding one of them
by measuring $\ket{\psi_t}$ reaches $1-o(1)$ for $t=O(\sqrt{N/k})$. 


\subsection*{Spherical trigonometry}

Spherical trigonometry is a branch of geometry which deals with the relationships between trigonometric functions of the sides and angles of the spherical polygons.
Trigonometry on a sphere differs from the traditional planar trigonometry.
For example, all distances are measured as angular distances.

In the context of this paper we need only a few basic formula for right spherical triangles.
Let $ABC$ be a right spherical triangle with a right angle $C$.
Then the following following set of rules (known as  Napier's rules) applies:

\begin{tabular}{p{13em}@{\qquad\qquad}p{13em}}
{\begin{align}
\cos{c} & = \cos{a}\cos{b} \tag{R1}\label{eq:R1} \\
\sin{a} & = \sin{A}\sin{c} \tag{R2}\label{eq:R2} \\
\sin{b} & = \sin{B}\sin{c} \tag{R3}\label{eq:R3} \\
\tan{a} & = \tan{A}\sin{b} \tag{R4}\label{eq:R4} \\
\tan{b} & = \tan{B}\sin{a} \tag{R5}\label{eq:R5}
\end{align}}
&
{\begin{align}
\tan{b} & = \cos{A}\tan{c} \tag{R6}\label{eq:R6} \\
\tan{a} & = \cos{B}\tan{c} \tag{R7}\label{eq:R7} \\
\cos{A} & = \sin{B}\cos{a} \tag{R8}\label{eq:R8} \\
\cos{B} & = \sin{A}\cos{b} \tag{R9}\label{eq:R9} \\
\cos{c} & = \cot{A}\cot{B} \tag{R10}\label{eq:R10}
\end{align}}
\end{tabular}
\begin{figure}[h]
\centering
\includegraphics[scale=0.5]{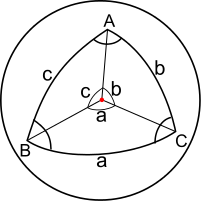}
\caption{Spherical trigonometry basic triangle.}
\label{fig:1}
\end{figure}

For more detailed introduction into spherical trigonometry see \cite{Tod86}.


\section{Model and results}


\subsection*{Error model}
 
Suppose we have a search space of size $N$ containing $k$ marked elements $i_1, i_2, \ldots, i_k$.
First $k - 1$ marked elements are {\em non-faulty} -- the query always returns them as marked.
Last marked element is {\em faulty} -- the query might return it as unmarked.

More formally, on each step, instead of the correct query $Q$, we apply 
a faulty query $Q'$ defined as follows:

\begin{itemize}
\item
$Q' \ket{i_k} = \ket{i_k}$ with probability $p$;
\item
$Q' \ket{i_k} = - \ket{i_k}$ with probability $1 - p$;
\item
$Q' \ket{j} = Q \ket{j}$ if $j \neq i_k$.
\end{itemize}


\subsection*{Summary of results}

First we show that if there is at least one non-faulty marked element, then it is still possible to find a non-faulty marked element in $O(\sqrt{N})$ queries with $\Theta(1)$ probability.

\begin{Theorem}
\label{thm:search}
Let $k \geq 3$, then we can choose $t = O(\sqrt{N/k})$ so that, if we run Grover's algorithm for $t$ steps and measure the final state, the probability of finding a marked element is at least $\cos^{2}\frac{\pi}{8} = 0.85\ldots$.

For $k=2$, the probability of finding a marked element is at least 
$\cos^{2}\frac{\pi}{8} = 0.85\ldots$ under a promise that at most one fault occurs and at least $0.74\ldots$ in the general case.
\end{Theorem}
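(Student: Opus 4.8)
\emph{Proof plan.} First I would reduce to a three-dimensional picture. Since $D$, the exact query $Q$, the faulty query $Q'$ and the initial state $\ket{\psi_0}$ are all invariant under permuting the $k-1$ non-faulty marked elements and under permuting the $N-k$ unmarked elements, every state produced by the algorithm -- in every branch of the random faults -- stays in $\mathcal H=\mathrm{span}\{\ket{i_k},\ket M,\ket U\}$, where $\ket M=\tfrac1{\sqrt{k-1}}\sum_{j<k}\ket{i_j}$ and $\ket U=\tfrac1{\sqrt{N-k}}\sum_{j\text{ unmarked}}\ket j$. A pure branch is then a point of the unit sphere of $\mathcal H$; the probability of measuring a marked element in that branch is $1-|\langle U|\psi\rangle|^2=\cos^2\delta$, where $\delta$ is the angular distance from $\ket\psi$ to the great circle $G=\mathrm{span}\{\ket{i_k},\ket M\}$ (the ``marked'' circle); and the overall success probability is the average of $\cos^2\delta$ over branches. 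So it suffices to exhibit one $t=O(\sqrt{N/k})$ for which $\delta\le\pi/8$ (respectively $\delta\le\arccos\sqrt{0.74\ldots}$) at step $t$ in every branch.

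Next I would identify the two one-step maps as rotations of the sphere. Each of $DQ$ and $DQ'$ (in the sign-flipping outcome) is a product of two reflections, hence a rotation: $DQ$ is the ordinary Grover rotation by $2\theta_k$, $\sin\theta_k=\sqrt{k/N}$, about the axis $\tfrac1{\sqrt k}(\ket M-\sqrt{k-1}\,\ket{i_k})\in G$, carrying $\ket{\psi_0}$ towards $m_k:=\tfrac1{\sqrt k}\ket{i_k}+\tfrac{\sqrt{k-1}}{\sqrt k}\ket M\in G$; the faulty step is the rotation by $2\theta_{k-1}$, $\sin\theta_{k-1}=\sqrt{(k-1)/N}$, about an axis within $O(1/\sqrt N)$ of $\ket{i_k}\in G$, carrying $\ket{\psi_0}$ towards $\ket M\in G$. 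The point is that \emph{both} rotations drive the near-polar start down to $G$, just to two points of $G$ whose angular separation is $\arccos\sqrt{1-1/k}$ -- equal to $\pi/4$ for $k=2$ and strictly smaller (so the two axes are better aligned) for $k\ge3$. I would then follow $\ket{\psi_t}$ on the sphere and estimate $\delta$ by applying Napier's rules \eqref{eq:R1}--\eqref{eq:R10} to the right spherical triangles formed by the current state, the two axes, and $G$.

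The three claims then separate. In the fault-free run the analysis is standard Grover: with $t$ the integer nearest $\tfrac{\pi}{4\theta_k}$ the state lands within $\theta_k=o(1)$ of $m_k\in G$, so $\cos^2\delta=1-o(1)$. For $k=2$ with at most one fault, the trajectory is an arc about the first axis, one short arc about the second, and a further arc about the first; composing two rotations and using the right-triangle identities shows that for this $t$ the residual distance to $G$ never exceeds $\pi/8$, wherever the fault fell. For $k\ge3$ with arbitrary faults, the same local estimate bounds the deflection caused by each fault in terms of the smaller separation $\arccos\sqrt{1-1/k}$, and an induction on the step keeps $\ket{\psi_t}$ inside the spherical cap of radius $\pi/8$ about $G$ throughout the window $\Theta(\sqrt{N/k})$; averaging over branches yields $\ge\cos^2\tfrac\pi8$. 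For $k=2$ in full generality the deflections can accumulate -- the two axes being $\pi/4$ apart -- and a more careful worst-case bound over all placements of the faults, again via the spherical right-triangle relations, gives the weaker radius whose cosine-squared is $0.74\ldots$.

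I expect the main obstacle to be precisely this last case: bounding the \emph{accumulated} effect of many faults when $k=2$. What is needed is an invariant region -- a spherical cap or lune around $G$ -- preserved by \emph{both} one-step rotations for every $t$ in a window of length $\Theta(\sqrt N)$, so that no schedule of faults can push the state any further from $G$; the constant $0.74\ldots$ is then $\cos^2$ of the radius of the smallest such region. A secondary technical annoyance is that $\theta_k\ne\theta_{k-1}$, so the two rotations advance at slightly different rates and their phases drift apart; this is harmless for large $k$ but must be tracked for small $k$, as it is exactly the source of the gap between the $k\ge3$ and the general $k=2$ bounds.
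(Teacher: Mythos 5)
Your reduction to the unit sphere of $\mathrm{span}\{\ket{i_k},\ket{M},\ket{U}\}$, the identification of the success probability with $\cos^2\delta$ for $\delta$ the angular distance to the great circle ``$\alpha=0$'', and the use of Napier's rules to bound the deflection caused by a single fault all coincide with the paper's argument; your treatment of the at-most-one-fault case is essentially Corollary~\ref{cor:onefault} plus the ``run $\tfrac54$ times longer'' trick. (One small correction along the way: restricted to the three-dimensional invariant subspace, the faulty step $DQ'$ has determinant $-1$ --- it is a planar rotation by $2\theta_{k-1}$ composed with a reflection through that plane, not a proper rotation; the paper sidesteps this by factoring the faulty step as $DQ$ followed by the reflection $E:\gamma\mapsto-\gamma$ across the horizontal equator.)

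The genuine gap is in the arbitrary-fault case, which is exactly what the $k\ge3$ claim and the $0.74\ldots$ bound for $k=2$ require. You propose to bound the deflection of each fault and run an induction keeping $\ket{\psi_t}$ inside a cap of radius $\pi/8$ about $G$; this cannot work as stated: a single fault can already deflect the state by up to $\pi/4$ (your own one-fault estimate), there can be $\Theta(\sqrt{N/k})$ faults, and in any case the state starts at distance $\approx\pi/2$ from $G$, so it is outside any such cap for most of the window. The paper's resolution is not an accumulation bound but a monotonicity argument in the longitude coordinate $b$ (the projection onto the horizontal equator): the fault $E$ is a vertical jump that leaves $b$ unchanged, while every $DQ$ step advances $b$ by at least $v_\Pi(b)$, a quantity bounded below uniformly over all states reachable on meridian $b$ as in \eqref{eq:pspeed}. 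Consequently the number of steps needed for \emph{every} branch to pass meridian $b^*$ is at most $\int_0^{b^*}v_\Pi(b)^{-1}\,\mathrm{d}b$ as in \eqref{eq:steps}, independent of the fault schedule, and what one bounds by $\pi/4$ (for $A\le0.1953\ldots\pi$, i.e.\ $k\ge3$) or by $0.34\pi$ (for $A=\pi/4$, i.e.\ $k=2$) is the longitudinal spread \eqref{eq:difference} between the fastest (fault-free) and slowest branches at measurement time. Your closing suggestion of an invariant lune around $G$ preserved by both one-step maps is the right instinct, but you correctly flag it as unresolved; the speed-integral argument is the missing ingredient that replaces it.
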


\noindent
We conjecture that, for $k=2$, the probability is at least $0.85\ldots$ even in the general case.

Second, we analyze the limiting behavior of the algorithm for a large number of steps and show the existence and structure of limiting state $\rho_{lim}$.

\begin{Theorem}
\label{thm:limiting_behaviour}
Let $\rho_t$ be the density matrix of state of Grover's algorithm with a faulty oracle $Q'$ after $t$ queries.
Then, the sequence $\rho_1, \rho_2, \ldots$ converges to 
$$
\rho_{lim} = \frac{1}{3} \ket{\psi_{+}}\bra{\psi_{+}} + \frac{1}{3} \ket{i_k}\bra{i_k} + \frac{1}{3} \ket{\psi_{-}}\bra{\psi_{-}} 
$$
where $\ket{\psi_{+}} = \frac{1}{\sqrt{k-1}} \sum_{j=1}^{k-1}\ket{i_j}$ is the uniform superposition over all non-faulty marked elements and $\ket{\psi_{-}} = \frac{1}{\sqrt{N-k}} \sum_{i \neq i_j} \ket{i}$ is the uniform superposition over all non-marked elements.
\end{Theorem}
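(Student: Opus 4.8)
The plan is to reduce the $N$-dimensional dynamics to a low-dimensional invariant subspace and then analyze the induced random process on that subspace. First I would observe that the state of the algorithm, starting from $\ket{\psi_0}$, always stays in the four-dimensional (real) subspace spanned by $\ket{i_k}$, $\ket{\psi_+}$, $\ket{\psi_-}$, and — because the faulty query acts nontrivially only on $\ket{i_k}$ — actually in the three-dimensional span of $\ket{i_k}$, $\ket{\psi_+}$, $\ket{\psi_-}$ once we note that $D$ and both $Q$ and $Q'$ preserve this span (the uniform superposition $\ket{\psi_0}$ is a fixed linear combination of these three vectors, $Q$ and $Q'$ are diagonal in the standard basis and hence act as $\pm 1$ on each of $\ket{i_k}$, $\ket{\psi_+}$, $\ket{\psi_-}$, and $D = 2\ket{\psi_0}\bra{\psi_0} - I$ maps the span to itself). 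So I would first write down the $3\times 3$ matrices $U_0 = DQ'$ for the "no-fault" branch (when $Q'$ acts as $-1$ on $\ket{i_k}$) and $U_1 = DQ'_{\text{fail}}$ for the "fault" branch (when $Q'$ acts as $+1$ on $\ket{i_k}$), treating $N, k$ as parameters and later taking limits; each step of the algorithm applies $U_1$ with probability $p$ and $U_0$ with probability $1-p$, independently.

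Next I would pass to density matrices and work with the induced channel on $3\times 3$ (or, over $\mathbb{R}$, $3\times 3$ symmetric) matrices: $\Phi(\rho) = (1-p)\, U_0 \rho U_0^{\dagger} + p\, U_1 \rho U_1^{\dagger}$, so that $\rho_t = \Phi^t(\rho_0)$. The goal is to show $\Phi^t(\rho_0)$ converges, and to identify the limit as $\rho_{lim} = \tfrac13(\ket{i_k}\bra{i_k} + \ket{\psi_+}\bra{\psi_+} + \ket{\psi_-}\bra{\psi_-})$. The natural route is to show (i) $\rho_{lim}$ is a fixed point of $\Phi$, and (ii) it is the unique fixed point with $\Phi^t$ convergent on all of state space, which amounts to analyzing the spectrum of $\Phi$: the eigenvalue $1$ should be simple (on the relevant real vector space of symmetric matrices), all other eigenvalues should have modulus strictly less than $1$, and there should be no other eigenvalue of modulus exactly $1$ (no periodic component). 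For (i) one checks directly that $U_0$ and $U_1$ each map the set $\{\ket{i_k}, \ket{\psi_+}, \ket{\psi_-}\}$ to an orthonormal set that is a signed permutation / rotation, so that $\tfrac13$ times the identity on this 3-dimensional space is preserved by each of $U_0 \cdot U_0^{\dagger}$ and $U_1 \cdot U_1^{\dagger}$ — hence by their convex combination. (Here it is cleanest to use that in the large-$N$ limit the relevant angles degenerate in a controlled way, or to keep $N$ finite and verify the fixed-point identity exactly.)

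The main work — and the main obstacle — is part (ii), the spectral gap: showing that $1$ is a simple eigenvalue of $\Phi$ and that the rest of the spectrum lies strictly inside the unit disk, with nothing on the unit circle except the simple $1$. I would diagonalize (or put in normal form) $U_0$ and $U_1$ individually — each is a rotation of the 3-dimensional real space, so has eigenvalues $1, e^{\pm i\theta_j}$ for appropriate rotation angles $\theta_0, \theta_1$ — and then argue that because the two rotation axes/angles are generically incommensurate, no nonzero symmetric matrix other than a multiple of the identity is fixed by both conjugations, and more strongly that the convex combination $\Phi$ is a strict contraction on the orthogonal complement of $\rho_{lim}$ inside the space of real symmetric matrices. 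A clean way to get strict contraction is to note $\Phi$ is a unital, trace-preserving, positive (completely positive) map, so its spectrum lies in the closed unit disk and unimodular eigenvalues correspond to a "peripheral" fixed-point algebra; then one shows that the only operators commuting with both $U_0$ and $U_1$ (equivalently, fixed by the peripheral part) are scalars, which forces the peripheral spectrum to be just $\{1\}$ and hence $\Phi^t \to$ projection onto $\rho_{lim}$. Care is needed about the degenerate parameter values of $p$ (e.g. $p=0$ or $p=1$, where the dynamics is purely unitary and does not converge) — the statement should implicitly assume $0 < p < 1$ — and about the large-$N$ asymptotics if one wants the clean $\tfrac13$-$\tfrac13$-$\tfrac13$ form rather than an $N$-dependent correction. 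Finally, once convergence on the 3-dimensional invariant subspace is established, I would note the orthogonal complement in $\mathbb{C}^N$ is acted on by $DQ'$ trivially in the relevant sense (the component of $\rho_0$ there is zero and stays zero), so convergence of $\rho_t$ in the full space follows.
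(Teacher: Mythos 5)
Your route is genuinely different from the paper's. The paper does not set up a channel on an invariant subspace and analyze its spectrum; instead it writes $\rho_t$ in the six-parameter form forced by symmetry (entries $a,a',b,c,d',d$), computes the action of $Q'$ and of $D$ on these parameters, imports from \cite{AB+13} the fact that the coherences $a'$ and $d'$ between $\ket{i_k}$ and the rest decay to zero (this is where $0<p<1$ enters, via the factor $-(2p-1)$), and then shows by an elementary calculation that once $a',d'<\epsilon$ all remaining parameters are within $O(\epsilon)$ of values satisfying $(k-1)a=b=(N-k)d=\frac13$. Your reduction to the $3$-dimensional invariant subspace is correct and is essentially the same symmetry reduction in Schr\"odinger rather than density-matrix language, and your observation that the limit must be the maximally mixed state $\frac13 I$ on that subspace is cleaner than the paper's bookkeeping; in particular your worry about an $N$-dependent correction to the $\frac13$-$\frac13$-$\frac13$ weights is unfounded, since $\frac13 I_3$ is exactly the maximally mixed state on the span of the three orthonormal vectors for every finite $N$.

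The gap is your step (ii). You reduce everything to showing that the peripheral spectrum of $\Phi$ is the simple eigenvalue $1$, and you justify this by saying the two rotation axes/angles are ``generically incommensurate.'' Genericity is not enough: the theorem concerns the specific orthogonal maps $U_0=DQ$ and $U_1=DQE$ arising from Grover's algorithm, and you must verify triviality of the commutant for them. This is doable --- for instance $U_0^{-1}U_1$ equals $\mathrm{diag}(1,-1,1)$ in the basis $(\ket{\psi_{+}},\ket{i_k},\ket{\psi_{-}})$, so any $X$ commuting with both $U_0$ and $U_1$ preserves the splitting into $\ket{i_k}$ and its complement, and since $U_0\ket{i_k}$ has nonzero overlap with all three basis vectors one can push this to $X$ being scalar --- but you have not carried it out, and you still must separately exclude unimodular eigenvalues $\lambda\neq 1$ (aperiodicity of the peripheral part), which is a distinct argument from triviality of the commutant. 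A smaller inaccuracy: $U_1$ has determinant $-1$ on the $3$-dimensional subspace (its spectrum is $\{-1,e^{\pm i\theta_1}\}$, not $\{1,e^{\pm i\theta_1}\}$), so it is not a rotation. With these points filled in, your argument would give a complete and arguably more self-contained proof than the paper's, which leans on \cite{AB+13} for the decisive decay of $a'$ and $d'$.
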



\section{Search}

In this section we analyze the evolution of the state of Grover's algorithm in presence of multiple non-faulty and one faulty marked item. 
First we review the original Grover's algorithm, then we describe the effect of faults on the state of the algorithm.
We derive upper bounds on the effect of faults and
provide a modification of the original Grover's search algorithm which finds one of non-faulty marked items with $\Theta(1)$ probability in $O(\sqrt{N})$ queries.

\subsection{No faulty marked items}

Let us first consider the very basic search problem of Grover's algorithm.
Namely, we have $N$ items among which $k$ are marked\footnote{
It is usually considered that $k \ll N$,
as for $\frac{k}{N}\ge\lambda$ with sufficiently large $\lambda$
the search problem can be trivially solved by a probabilistic algorithm in time $O\left(\lambda^{-1}\right)$.
}.

Operator $D$ is symmetric w.r.t. permutations of amplitudes of all items,
and operator $Q$ is symmetric w.r.t. permutations of amplitudes of marked items,
as well as permutations of amplitudes of non-marked items.
So, on any step $t$ amplitudes of all marked items are equal to each other and amplitudes of all non-marked items are equal to each other.
Thus, we can represent $\ket{\psi_t}$ as:
\begin{displaymath}
\ket{\psi_t} = \sum_{i \in U} \alpha_t \ket{i} + \sum_{j \in M} \beta_t \ket{j},
\end{displaymath}
where $U$ stands for the set of indexes of non-marked items and $M$ stands for the set of indexes of $k$ marked items.
$\alpha_t$ and $\beta_t$ denote the amplitudes of respectively a non-marked item and a marked item on step $t$.
At each step of the algorithm we shall take care of two numbers only:
\begin{equation}
\label{eq:alphabeta}
\alpha_t\sqrt{N-k}\textrm{\qquad and \qquad}\beta_t\sqrt{k}.
\end{equation}
Since $\ket{\psi_t}$ is a unit vector, we have
\begin{displaymath}
\sum_{i \in U} \alpha_t^2 + \sum_{j \in M} \beta_t^2 = 1.
\end{displaymath}
Thus, values \eqref{eq:alphabeta} meet the equality \quad$\left(\vphantom{\beta_t\sqrt{k}}\alpha_t\sqrt{N-k}\right)^2+\left(\beta_t\sqrt{k}\right)^2=1$\quad
and correspond to a point on the unit circle.

Initially all amplitudes are equal, so $\alpha=\beta=\frac{1}{\sqrt{N}}$, and the numbers \eqref{eq:alphabeta} are
\begin{equation}
\label{eq:alphabeta0}
\alpha_0\sqrt{N-k}=\frac{\sqrt{N-k}}{\sqrt{N}}\textrm{\qquad and \qquad}\beta_0\sqrt{k}=\frac{\sqrt{k}}{\sqrt{N}}.
\end{equation}
During the first step of the algorithm operator $D$ does not change amplitudes of the state $\ket{\psi_0}$,
and operator $Q$ negates amplitudes of all marked items: $\beta_1=-\beta_0=-\frac{1}{\sqrt{N}}$.
So the numbers \eqref{eq:alphabeta} are
\begin{equation}
\label{eq:alphabeta1}
\alpha_1\sqrt{N-k}=\frac{\sqrt{N-k}}{\sqrt{N}}\textrm{\qquad and \qquad}\beta_1\sqrt{k}=-\frac{\sqrt{k}}{\sqrt{N}}.
\end{equation}

\begin{figure}[h]
\centering
\begin{picture}(150,150)
\put(0,0){\includegraphics[width=150pt,height=150pt]{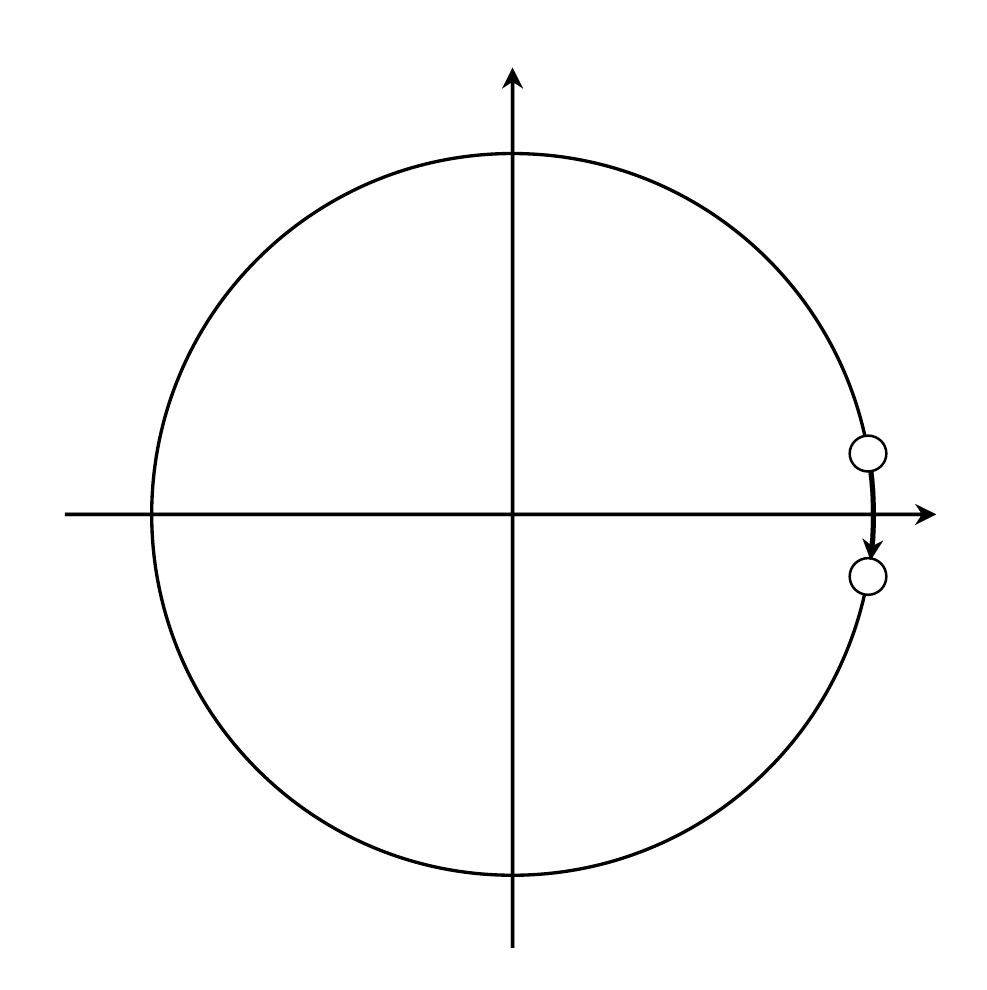}}
\put(10,63){\hbox{$-1$}}
\put(80,10){\hbox{$-1$}}
\put(80,63){\hbox{$0$}}
\put(80,118){\hbox{$1$}}
\put(145,70){\hbox{$\alpha\sqrt{N-k}$}}
\put(80,134){\hbox{$\beta\sqrt{k}$}}
\put(135,85){\hbox{$\ket{\psi_0}$}}
\put(135,55){\hbox{$\ket{\psi_1}$}}
\end{picture}
\caption{The first step of Grover's algorithm}
\label{fig:fig1}
\end{figure}

According to \eqref{eq:alphabeta0} and \eqref{eq:alphabeta1},
transformation $\ket{\psi_0} \xrightarrow{DQ} \ket{\psi_1}$
can be represented on the unit circle as shown on Figure~\ref{fig:fig1}.
As before, we assume $k \ll N$, so that $\sqrt{k} \ll \sqrt{N-k}$,
and the angle between $\ket{\psi_0}$ and $\ket{\psi_1}$ is
\begin{equation}
\label{eq:rotation}
2\arcsin{\left(\frac{\sqrt{k}}{\sqrt{N}}\right)} \approx 2\frac{\sqrt{k}}{\sqrt{N}}
\end{equation}
(this approximation holds for small-valued $\frac{\sqrt{k}}{\sqrt{N}}$).

Similarly, all further applications of operator $DQ$ are nothing but clockwise rotations by angle
$\sim2\frac{\sqrt{k}}{\sqrt{N}}$. After $\sim\frac{\pi/2}{2\sqrt{k}/\sqrt{N}}=\frac{0.785\ldots\sqrt{N}}{\sqrt{k}}$
such rotations the resulting state $\ket{\psi_{\lfloor{0.785\ldots\sqrt{N/k}}\rfloor}}$
reaches the neighborhood of the point $\left(0,-1\right)$.
Measuring $\ket{\psi_{\lfloor{0.785\ldots\sqrt{N/k}}\rfloor}}$ results in getting index of a marked item, with probability almost $1$.


\subsection{One Faulty Marked Item}

Let us now consider the case with
\begin{itemize}
\renewcommand{\labelitemi}{$\bullet$}
\item $N-k$ non-marked items,
\item $k-1$ marked items, and
\item $1$ \textit{faulty} marked item.
\end{itemize}
To simplify the analysis we shall interpret the step of the algorithm as consequent application of three operators:
ordinary diffusion $D$ and ordinary query $Q$, and -- with probability $\epsilon$ -- error $E$, which negates back the amplitude of the faulty marked item.

As the operation $E$ is probabilistic one must deal not with a pure state $\ket{\psi_t}$, but with a mixed state $\rho_t$ (probabilistic mixture of pure states).
We shall denote components of the mixture after $t$ steps as
$\ket{\psi^w_t}$, where $w \in \left\{0,1\right\}^t$ stands for the sequence of $t$ events:
$0$ -- the query has negated the amplitude of the faulty marked item ($DQ$),
and $1$ -- the query has left that amplitude of the faulty marked item unchanged ($DQE$).
So the mixture $\rho_t$ looks as follows:

\begin{displaymath}
\rho_t = \sum_{w \in \left\{0,1\right\}^t}{\epsilon^{\left|w\right|}\left(1-\epsilon\right)^{t-\left|w\right|}}\ket{\psi^w_t}\bra{\psi^w_t}.
\end{displaymath}

Transformations $D$, $Q$ and $E$ are symmetric
w.r.t. permutations of amplitudes of non-faulty marked items,
as well as permutation of amplitudes of non-marked items.
So, in any state $\ket{\psi^w_t}$ of the mixture $\ket{\psi^*_t}$,
amplitudes of all non-faulty marked items are equal to each other
and amplitudes of all non-marked items are equal to each other.
Thus, we can represent $\ket{\psi^w_t}$ as:
\begin{equation}
\label{eq:psiwt}
\ket{\psi^w_t} = \sum_{i \in U} \alpha^w_t \ket{i} + \sum_{\substack{j \in M,\\ j \neq i_k}} \beta^w_t \ket{j} + \gamma^w_t\ket{i_k},
\end{equation}
where $U$ stands for the set of indexes of non-marked items and $M$ stands for the set of indexes of $k$ marked items.
$\alpha^w_t$, $\beta^w_t$ and $\gamma^w_t$ denote the amplitudes of respectively a non-marked item, a non-faulty marked item and the faulty marked item.

At each step of the algorithm for each of $2^t$ scenarios $w$ we shall take care of three numbers:
\begin{equation}
\label{eq:alphabetagamma}
\alpha^w_t\sqrt{N-k},\qquad\beta^w_t\sqrt{k-1}\textrm{\qquad and \qquad} \gamma^w_t.
\end{equation}

Since $\ket{\psi^w_t}$ are unit vectors we have
\begin{displaymath}
\sum_{i \in U} \left(\alpha^w_t\right)^2 + \sum_{\substack{j \in M,\\ j \neq f}} \left(\beta^w_t\right)^2 + \left(\gamma^w_t\right)^2 = 1.
\end{displaymath}
Thus, values \eqref{eq:alphabetagamma} meet the equality
\quad$\left(\vphantom{\beta^w_t\sqrt{k-1}}\alpha^w_t\sqrt{N-k}\right)^2+\left(\beta^w_t\sqrt{k-1}\right)^2+\left(\gamma^w_t\right)^2=1$\quad
and correspond to a points on the unit sphere.

Initially the mixture consists of state $\ket{\psi_0}$ with amplitudes of all items being equal,
so $\alpha=\beta=\gamma=\frac{1}{\sqrt{N}}$, and the numbers \eqref{eq:alphabetagamma} for $t=0$ are
\begin{equation}
\label{eq:alphabetagamma0}
\alpha_0\sqrt{N-k}=\frac{\sqrt{N-k}}{\sqrt{N}},\qquad\beta_0\sqrt{k}=\frac{\sqrt{k-1}}{\sqrt{N}}\textrm{\qquad and \qquad}\gamma_0=\frac{1}{\sqrt{N}}.
\end{equation}
During the first step of the algorithm
\begin{itemize}
\renewcommand{\labelitemi}{$\bullet$}
\item $D$ does not change amplitudes of the state $\ket{\psi_0}$;
\item $Q$ negates the amplitudes of all marked items: $\beta^w_1=\gamma^0_1=-\frac{1}{\sqrt{N}}$;
\item $E$ negates back the amplitude of the faulty marked item: $\gamma^1_1=-\gamma^0_1=\frac{1}{\sqrt{N}}$.
\end{itemize}

So the numbers \eqref{eq:alphabetagamma} for $t=1$ are as follows:
\begin{equation}
\label{eq:alphabetagamma1}
\begin{array}{r@{~~~}c@{~~~}c@{~~~~}c}
\alpha^0_1\sqrt{N-k}=\frac{\sqrt{N-k}}{\sqrt{N}}, & \beta^0_1\sqrt{k-1}=-\frac{\sqrt{k-1}}{\sqrt{N}}, & \gamma^0_1=-\frac{1}{\sqrt{N}} & \textrm{for $w=0$;} \\
\alpha^1_1\sqrt{N-k}=\frac{\sqrt{N-k}}{\sqrt{N}}, & \beta^1_1\sqrt{k-1}=-\frac{\sqrt{k-1}}{\sqrt{N}}, & \gamma^1_1=\frac{1}{\sqrt{N}}  & \textrm{for $w=1$.}
\end{array}
\end{equation}

According to \eqref{eq:alphabetagamma0} and \eqref{eq:alphabetagamma1},
transformation $\ket{\psi_0}\bra{\psi_0} \xrightarrow{DQ\left(E\right)} \left(1-\epsilon\right)\ket{\psi^0_1}\bra{\psi^0_1}+\epsilon\ket{\psi^1_1}\bra{\psi^1_1}$
can be represented on the unit sphere as shown on Figure~\ref{fig:fig2}.

\begin{figure}
\centering
\begin{picture}(400,320)
\put(0,0){\includegraphics[width=300pt,height=300pt]{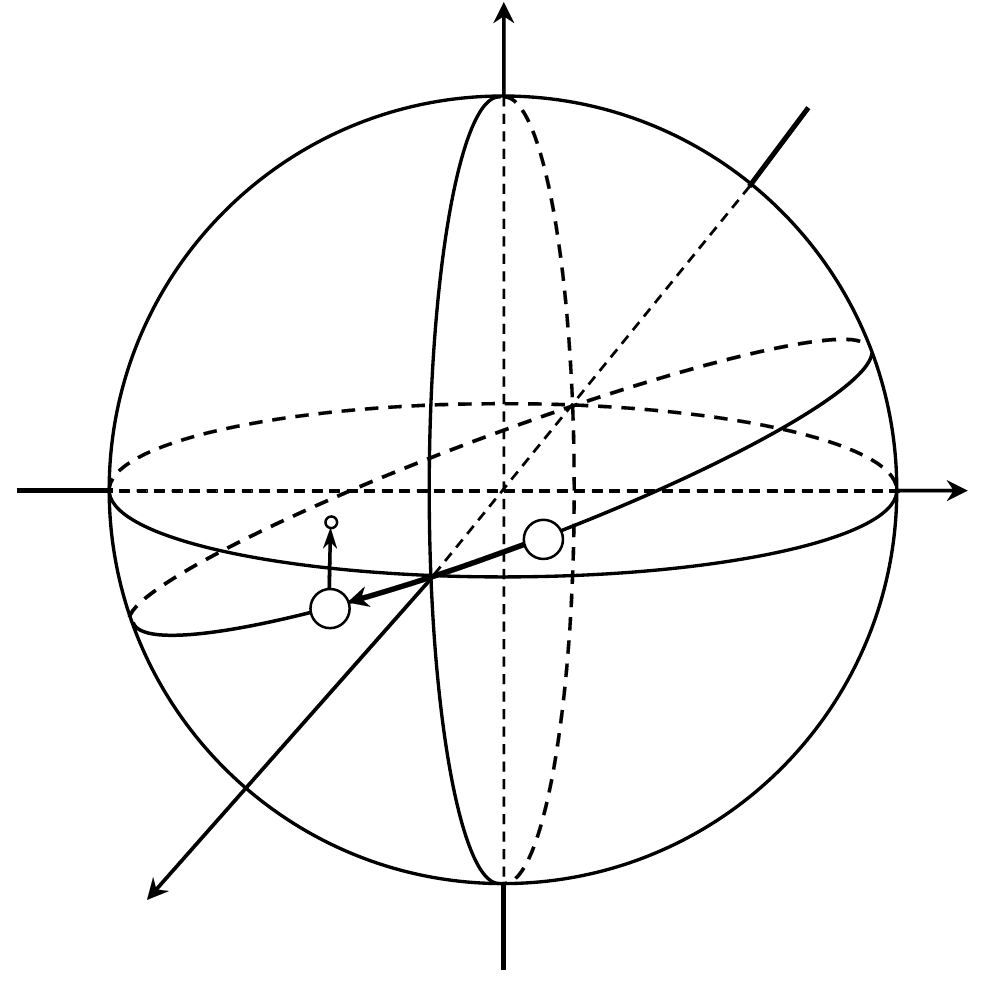}}
\put(20,142){\hbox{$-1$}}
\put(274,142){\hbox{$1$}}
\put(156,142){\hbox{$0$}}
\put(156,23){\hbox{$-1$}}
\put(156,274){\hbox{$1$}}
\put(156,274){\hbox{$1$}}
\put(22,13){\hbox{$\alpha\sqrt{N-k}$}}
\put(300,148){\hbox{$\beta\sqrt{k-1}$}}
\put(145,306){\hbox{$\gamma\sqrt{1}$}}
\put(175,132){\hbox{$\ket{\psi_0}$}}
\put(86,100){\hbox{$\ket{\psi^0_1}$}}
\put(105,140){\hbox{$\ket{\psi^1_1}$}}
\end{picture}
\caption{The first step of Grover's algorithm with a faulty marked item.
Size of a ball corresponds to a probability of the state in the mixture.
$\ket{\psi^*_0}\bra{\psi^*_0} = 1 \ket{\psi_0}$ and
$\ket{\psi^*_1}\bra{\psi^*_1} = \left(1-\epsilon\right)\ket{\psi^0_1}\bra{\psi^0_1}+\epsilon\ket{\psi^1_1}\bra{\psi^1_1}$
}
\label{fig:fig2}
\end{figure}

Note that if $\epsilon=0$ the state of the algorithm travels clockwise along the slanted orthodrome\footnote{Orthodrome, also known as a great circle, of a sphere is the intersection of the sphere with a plane which passes through the center point of the sphere.} which contains points $\ket{\psi_0}$ and $\left(1,0,0\right)$.
The travel lasts until the state reaches the neighborhood of the vertical orthodrome (hereafter we call it \textit{meridian}) where $\alpha=0$
after $t \approx 0.785\ldots\sqrt{N/k}$ steps.

\begin{figure}[h]
\centering
\begin{picture}(150,150)
\put(25,0){\includegraphics[width=150pt,height=150pt]{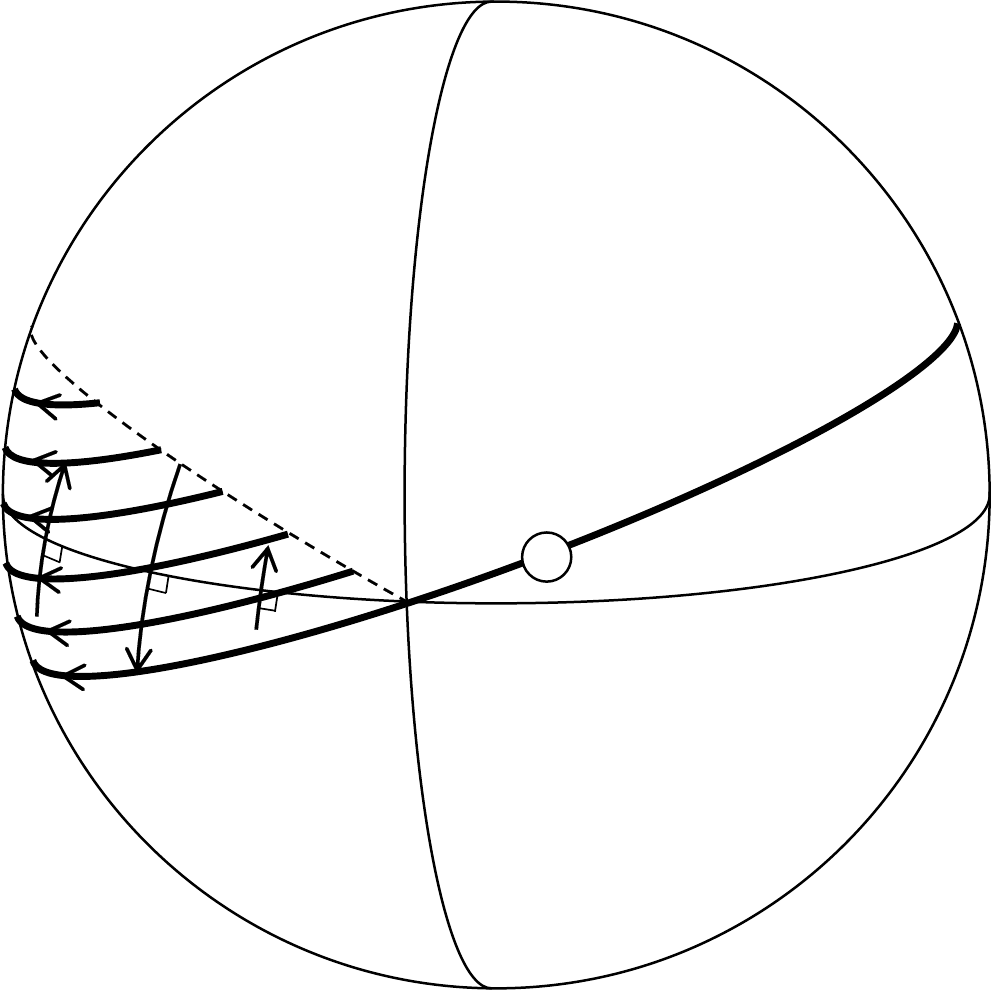}}
\put(100,73){\hbox{$\ket{\psi_0}$}}
\end{picture}
\caption{Probable routes of the initial state $\ket{\psi_0}$ in the run of Grover's algorithm with a faulty marked item.}
\label{fig:fig3}
\end{figure}

If $k \ll N$ the rotation angle \eqref{eq:rotation} is sufficiently small. The run of the algorithm can be viewed as
clockwise rotation of a state in parallel to the slanted orthodrome ($DQ$-movement)
with occasional ($\epsilon$-probable) up-and-down jumps around the horizontal orthodrome (hereafter we call it \textit{equator}),
which correspond to operator $E$, as shown on Figure~\ref{fig:fig3}.
During the first $0.785\ldots\sqrt{N/k}$ steps, the state cannot go out of the area which is covered with arrows on the figure.

As we already mentioned, $0.785\ldots\sqrt{N/k}$ steps are necessary to reach the desired plane where $\alpha=0$,
given that no fault occurs on the way (in our notation: in expression \eqref{eq:psiwt},
$\alpha^{00\ldots 0}_{\lfloor 0.785\ldots\sqrt{N/k}\rfloor}\approx 0$).

But what could the length of the route be if some faults occur on the state's way to the desired plane?
In the two following subsections we will derive upper bounds for the effect of these faults.

\subsubsection{At most one fault}

First, let us assume that the total number of faults is at most one.
Although this assumption seems to be rather implausible, we have some arguments for it:
\begin{itemize}
\item for sufficiently small $\epsilon$, we have $\ldots \ll \epsilon^3 \ll \epsilon^2 \ll \epsilon$,
so that probability of more than one fault
$\sum_{f=2}^{t}\left(\substack{t\\ f}\right) \epsilon^f\left(1-\epsilon\right)^{t-f}<t^2\epsilon^2$
could be neglected for number of steps $t \in o\left(\frac{1}{\epsilon}\right)$;
\item as we shall see later, the second and all subsequent faults have smaller effect
and even have great chances to drive the state closer to the desired meridian.
\end{itemize}

\begin{figure}[h]
\centering
\begin{picture}(150,150)
\put(25,0){\includegraphics[width=150pt,height=150pt]{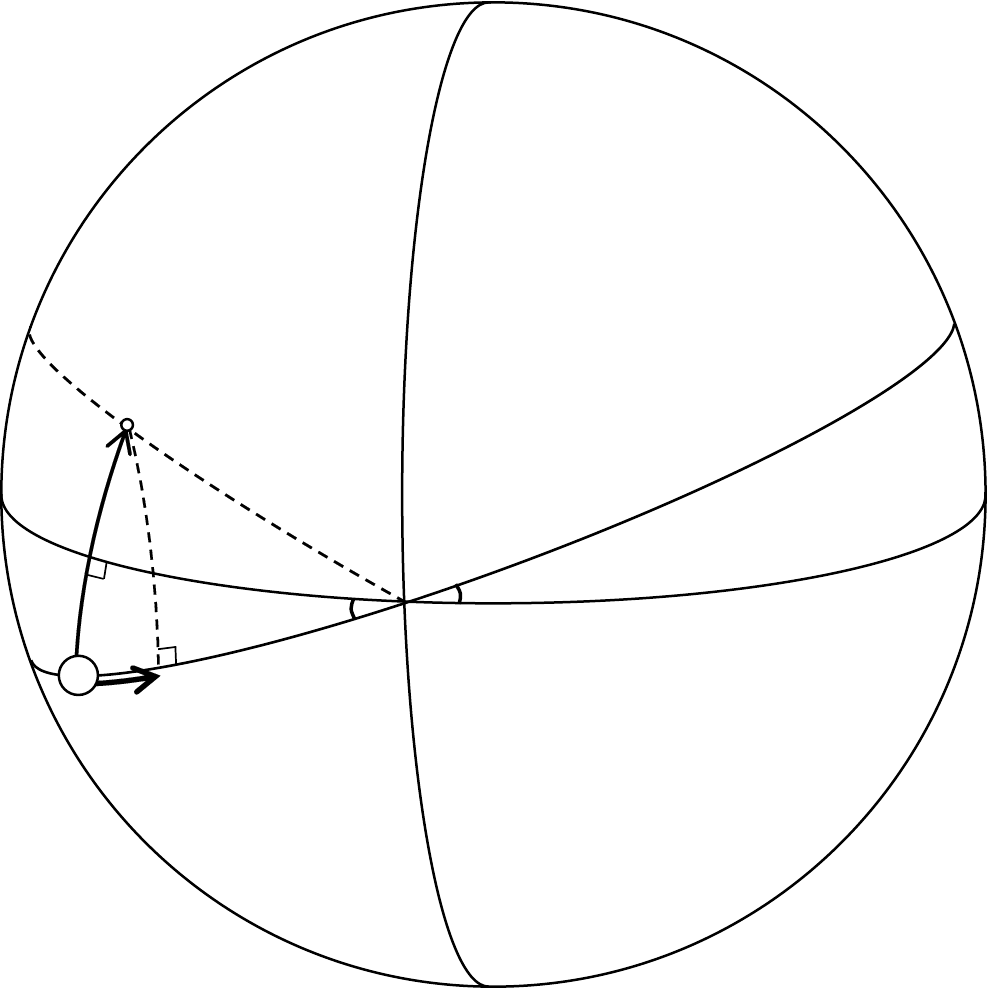}}
\put(10,33){\colorbox{white}{$\ket{\psi^{\ldots 0}_t}$}}
\put(43,90){\hbox{$\ket{\psi^{\ldots 1}_t}$}}
\put(30,55){\hbox{$a$}}
\put(40,40){\hbox{$c_{\rm err}$}}
\put(65,54){\hbox{$A$}}
\end{picture}
\caption{Metrics for a fault.}
\label{fig:fig4}
\end{figure}

Let us calculate the effect of a fault in the sense of its projection onto the ``no-faults'' route.
On the Figure~\ref{fig:fig4} we illustrate $\epsilon$-probable transformation
$\ket{\psi^{\ldots 0}_t} \xrightarrow{E} \ket{\psi^{\ldots 1}_t}$, which happened on some step $t$.
The fault increases the angular distance to the desired meridian ($\alpha=0$) by $c_{\rm err}$.
Using rules of spherical trigonometry (\eqref{eq:R8} and \eqref{eq:R7}) we have:

\begin{equation}
\label{eq:erroreffect}
c_{\rm err} = \arctan{\left(\tan{2a}~\cos\arcsin{\frac{\cos{A}}{\cos{a}}}\right)} = \arctan{\left(\tan{2a}~\sqrt{1-\frac{\cos^2{A}}{\cos^2{a}}}\right)},
\end{equation}
where $A = \arctan{\frac{1}{\sqrt{k-1}}}$ is angle between the two equators, and $a$ is the distance between $\ket{\psi^{\ldots 0}_t}$ and the horizontal equator.
Note that $a$ is at most $A$ ($a=A$ only when $\ket{\psi^w_t}$ reaches the desired meridian), so $1-\frac{\cos^2{A}}{\cos^2{a}} \ge 0$.

In equation \eqref{eq:erroreffect} we assumed that $\ket{\psi^{\ldots 0}_t}$ is located on the bottom-margin of the arrow-filled area of the Figure~\ref{fig:fig3} (which always holds for one fault case).
For $\ket{\psi^{\ldots 0}_t}$ located above the bottom margin, we should calculate \eqref{eq:erroreffect} for a smaller value of $A$,
which will result in smaller value of $c_{\rm err}$.
For $\ket{\psi^{\ldots 0}_t}$ located above the horizontal equator, the fault-effect $c_{\rm err}$ is negative,
i.e. the resulting state $\ket{\psi^{\ldots 1}_t}$ is closer to the desired meridian
w.r.t. the direction in parallel to the slanted orthodrome.
Relaxing the above-mentioned assumption, we can conclude the following rough bound:

\begin{equation}
\label{eq:erroreffectbound0}
c_{\rm err} \le \arctan{\left(\tan{2a}~\sqrt{1-\frac{\cos^2{A}}{\cos^2{a}}}\right)}.
\end{equation}

Now, based on the inequality \eqref{eq:erroreffectbound0} we shall derive more precise bounds.

If the number of non-faulty marked items $k-1 \ge 1$, then the angle
$A = \arctan{\frac{1}{\sqrt{k-1}}} \in \left(\arctan{\frac{1}{\infty}};\arctan{\frac{1}{1}}\right] = \left(0;\frac{\pi}{4}\right]$.
So we have $0 \le a \le A \le \frac{\pi}{4}$.

Now let us consider different values of $a$. If $0 \le a \le \frac{\pi}{6}$, then \eqref{eq:erroreffect} is bounded by
\begin{equation}
\label{eq:erroreffectbound1}
c_{\rm err} \le \max_{\substack{0 \le a \le A,\\ a \le \frac{\pi}{6}}}{\arctan{\left(\tan{2a}~\sqrt{1-\frac{\cos^2{A}}{\cos^2{a}}}\right)}} \le \frac{\pi}{4},
\end{equation}
where the inequalities become equalities for $A=\frac{\pi}{4}$ and $a=\frac{\pi}{6}$.

If $\frac{\pi}{6} < a \le \frac{\pi}{4}$, then we can follow that the state $\ket{\psi^{\ldots 0}_t}$
is gone far away from the point ``$\alpha\sqrt{N-k}=1$'' of the unit sphere.
This distance between the point $\left(1,0,0\right)$ and the state $\ket{\psi^{\ldots 0}_t}$ can be derived from the rule \eqref{eq:R2}:
\begin{equation}
\label{eq:erroreffectbound2}
c = \arcsin{\frac{\sin{a}}{\sin{A}}} \ge \arcsin{\frac{\sin{\pi/6}}{\sin{\pi/4}}} = \frac{\pi}{4}
\end{equation}

Since the total distance between the point ``$\alpha\sqrt{N-k}=1$'' and any point of the meridian ``$\alpha\sqrt{N-k}=0$'' is exactly $\frac{\pi}{2}$,
we follow that the state $\ket{\psi^{\ldots 0}_t}$ is at most $\frac{\pi}{2}-\frac{\pi}{4}=\frac{\pi}{4}$ far from the desired meridian.

From \eqref{eq:erroreffectbound1} and \eqref{eq:erroreffectbound2} we formulate the following joint conclusion:
\begin{corollary}
\label{cor:onefault}
At least one of the following claims holds for any state $\ket{\psi^{\ldots 0}_t}$ with $0 \le a \le A \le \frac{\pi}{4}$:
\begin{itemize}
\item either the fault-effect $c_{\rm err} \le \frac{\pi}{4}$,
\item or the state $\ket{\psi^{\ldots 0}_t}$ is already at most $\frac{\pi}{4}$ far from the desired meridian ``$\alpha=0$''.
\end{itemize}
\end{corollary}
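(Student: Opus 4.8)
The statement is a disjunction, so the natural proof is a case split on $a$, the angular distance from the pre-fault state $\ket{\psi^{\ldots 0}_t}$ to the horizontal equator, at the threshold $a=\tfrac{\pi}{6}$: I would check that the first bullet holds when $0\le a\le\tfrac{\pi}{6}$ and the second when $\tfrac{\pi}{6}<a\le\tfrac{\pi}{4}$. These two regimes correspond exactly to the bounds \eqref{eq:erroreffectbound1} and \eqref{eq:erroreffectbound2} already recorded above, so the proof only has to assemble them; the only inputs are the bound \eqref{eq:erroreffectbound0} and Napier's rule \eqref{eq:R2}, and no new geometric construction is needed.

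For $0\le a\le\tfrac{\pi}{6}$ I would bound $c_{\rm err}$ by maximizing the right-hand side of \eqref{eq:erroreffectbound0} over the admissible region $0\le a\le A\le\tfrac{\pi}{4}$. The cleanest route is the change of variables $u=\cos 2a$, $v=\cos 2A$: then $u\in[\tfrac12,1]$, $v\in[0,1)$, the constraint $a\le A$ becomes $u\ge v$, and using $\cos^2\theta=(1+\cos 2\theta)/2$ the quantity $\tan^2(2a)\bigl(1-\tfrac{\cos^2 A}{\cos^2 a}\bigr)$ simplifies to the rational function $\tfrac{(1-u)(u-v)}{u^2}$. This is decreasing in $v$, hence largest at $v=0$ (that is, $A=\tfrac{\pi}{4}$, the largest admissible angle), where it equals $\tfrac1u-1$; that in turn is decreasing in $u$, hence largest at $u=\tfrac12$ (that is, $a=\tfrac{\pi}{6}$), where it equals $1$. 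Therefore $c_{\rm err}\le\arctan 1=\tfrac{\pi}{4}$, which is \eqref{eq:erroreffectbound1}, so the first bullet holds.

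For $\tfrac{\pi}{6}<a\le\tfrac{\pi}{4}$ I would instead locate the state. Napier's rule \eqref{eq:R2}, applied to the right spherical triangle with vertices $(1,0,0)$, $\ket{\psi^{\ldots 0}_t}$ and the foot of the perpendicular from $\ket{\psi^{\ldots 0}_t}$ onto the horizontal equator, gives $\sin c=\sin a/\sin A$ for the distance $c$ from $(1,0,0)$ to the state. Since $a>\tfrac{\pi}{6}$ forces $\sin a>\tfrac12$ and $A\le\tfrac{\pi}{4}$ forces $\sin A\le\tfrac{1}{\sqrt2}$, while $a\le A$ keeps the ratio at most $1$, we get $\sin c>\tfrac{1}{\sqrt2}$ with $c\in[0,\tfrac{\pi}{2}]$, hence $c\ge\tfrac{\pi}{4}$, i.e. \eqref{eq:erroreffectbound2}. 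Finally, $(1,0,0)$ is a pole of the meridian $\{\alpha=0\}$, so it lies at distance exactly $\tfrac{\pi}{2}$ from every point of that meridian; the great circle joining $\ket{\psi^{\ldots 0}_t}$ to $(1,0,0)$ thus meets the meridian at a point at distance $\tfrac{\pi}{2}-c$ from $\ket{\psi^{\ldots 0}_t}$ along that circle, so the state is at most $\tfrac{\pi}{2}-c\le\tfrac{\pi}{4}$ from the meridian, which is the second bullet.

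The one step that needs genuine care is the maximization for $a\le\tfrac{\pi}{6}$: in the right-hand side of \eqref{eq:erroreffectbound0} the factor $\tan 2a$ increases in $a$ while $\sqrt{1-\cos^2 A/\cos^2 a}$ decreases in $a$, so a naive monotonicity argument fails, and one really does want the substitution $u=\cos 2a$, $v=\cos 2A$ to collapse the two-variable optimum to the corner $(a,A)=(\tfrac{\pi}{6},\tfrac{\pi}{4})$; it is worth writing that reduction out. Everything else is bookkeeping with elementary inequalities and the spherical-trigonometry facts from the preliminaries.
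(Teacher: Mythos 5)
Your proposal is correct and follows essentially the same route as the paper: a case split at $a=\frac{\pi}{6}$, bounding $c_{\rm err}$ by maximizing the right-hand side of \eqref{eq:erroreffectbound0} over $0\le a\le A\le\frac{\pi}{4}$ in the first case (the paper simply asserts the maximum is attained at $A=\frac{\pi}{4}$, $a=\frac{\pi}{6}$; your substitution $u=\cos 2a$, $v=\cos 2A$ just makes that explicit), and in the second case using rule \eqref{eq:R2} to show $c\ge\frac{\pi}{4}$ so that the state is within $\frac{\pi}{2}-c\le\frac{\pi}{4}$ of the meridian. No substantive difference from the paper's argument.
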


\subsubsection{Any number of faults}

Now let us use another approach to study the evolution of a state in the considered settings.
Transformation $\ket{\psi^w_t} \xrightarrow{DQ} \ket{\psi^{w,0}_{t+1}}$ drives the state $\ket{\psi^w_t}$
clockwise in parallel to the slanted orthodrome by a distance, which depends on the position of this state on the unit sphere.
\\

\begin{figure}[h]
\begin{picture}(150,150)
\put(55,0){\includegraphics[width=150pt,height=150pt]{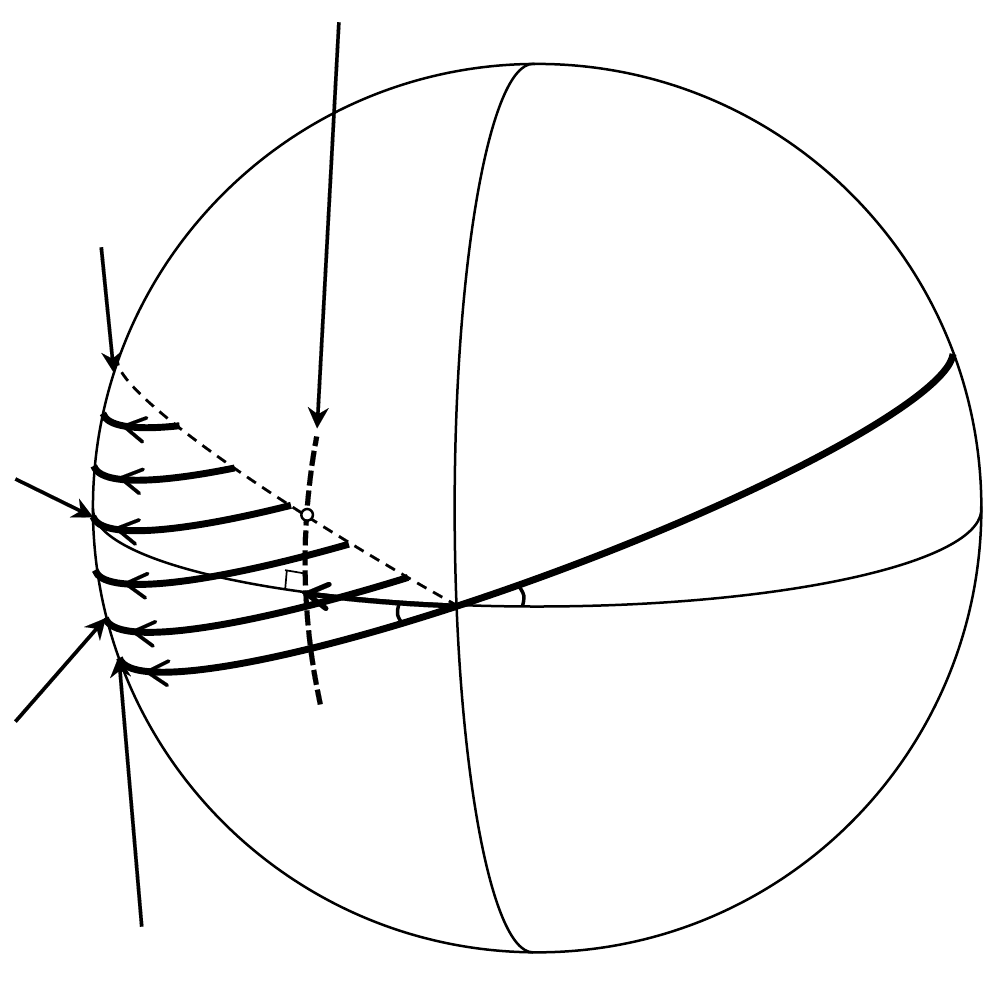}}
\put(101,57){\hbox{$\underbrace{~~~~~~~}_b$}}
\put(143,59){\hbox{$A$}}
\put(25,0){\hbox{$v{\scriptstyle_G = 2\arcsin{\frac{\sqrt{k}}{\sqrt{N}}} \\ \approx 2\sqrt{\frac{k}{N}}}$}}
\put(30,33){\hbox{$v{\scriptstyle_G \cos{\frac{A}{3}}}$}}
\put(30,81){\hbox{$v{\scriptstyle_G \cos{A}}$}}
\put(40,116){\hbox{$v{\scriptstyle_G \cos{2A}}$}}
\put(75,151){\hbox{$v{\scriptstyle\left(b\right)\ge}v{\scriptstyle_G \sqrt{1-\frac{\sin^2{2A}\tan^2{b}}{\cos^2{A}+\tan^2{b}}}}$}}
\put(103,73){\hbox{${\scriptstyle\ket{\psi^\uparrow_b}}$}}
\end{picture}
\caption{``Speed'' of a state located on some meridian $b$.}
\label{fig:fig5}
\end{figure}

On Figure~\ref{fig:fig5} we show a ``speed'' for each possible position of the state $\ket{\psi^w_t}$.
On the ``no-faults'' route (i.e. the slanted orthodrome) the speed coincides with that of the original Grover's algorithm:
\sloppy{$v_{\scriptstyle_G}~=~2\arcsin{\frac{\sqrt{k}}{\sqrt{N}}}~\approx~2\sqrt{\frac{k}{N}}$}.
After a state jumps up, its speed decreases depending on its distance to the slanted orthodrome:
e.g. on the parallel circle which contains point ``$\beta\sqrt{k-1}=1$'' of the units sphere,
its speed is $v_{\scriptstyle_G} \cos{A}$.

We can also calculate the speed of a state w.r.t. the direction in parallel to the horizontal equator,
i.e. the speed of its projection onto the horizontal equator $v_{\scriptstyle_\Pi}$.
Obviously, among all states on some meridian $b$, the uppermost state $\ket{\psi^\uparrow_b}$ has the least speed.
Distance between $\ket{\psi^\uparrow_b}$ and the point ``$\alpha\sqrt{N}=1$'' of the unit sphere,
can be derived from angle $A$, distance $b$ and rule \eqref{eq:R6}: $c = \arctan{\frac{\tan{b}}{\cos{A}}}$.
Distance between $\ket{\psi^\uparrow_b}$ and the slanted orthodrome
can be derived from distance $c$, angle $2A$ and rule \eqref{eq:R2}:
\sloppy{$a^\prime~=~\arcsin{\left(\sin{2A}\sin{c}\right)}~=~\arcsin{\left(\sin{2A}\sin{\arctan{\frac{\tan{b}}{\cos{A}}}}\right)}$}.
And the speed of $\ket{\psi^\uparrow_b}$
\begin{displaymath}
v^\uparrow\left(b\right)
= v_{\scriptstyle_G}\cos{\left(a^\prime\right)}
= v_{\scriptstyle_G}\sqrt{1-\frac{\sin^2{2A}~\tan^2{b}}{\cos^2{A}+\tan^2{b}}}
\end{displaymath}
serves as a natural lower bound for the speed of any state located on the same meridian $b$:
\begin{displaymath}
v\left(b\right)
\ge v^\uparrow\left(b\right)
= v_{\scriptstyle_G}\sqrt{1-\frac{\sin^2{2A}~\tan^2{b}}{\cos^2{A}+\tan^2{b}}}.
\end{displaymath}
Projection of the speed $v\left(b\right)$ onto the horizontal equator might be slightly less:
\begin{equation}
\label{eq:pspeed}
v_{\scriptstyle_\Pi}\left(b\right)
\ge v\left(b\right)\cos{A}
\ge v_{\scriptstyle_G}\sqrt{1-\frac{\sin^2{2A}~\tan^2{b}}{\cos^2{A}+\tan^2{b}}}\cos{A}.
\end{equation}
Vertical jumps leave states on the same meridians, so faults does not affect value $b$.
Moving at least at speed \eqref{eq:pspeed}, a state will pass distance $\frac{\pi}{2}$ in at most
$\int_0^\frac{\pi}{2} \! \frac{1}{v_{\scriptstyle_\Pi}\left(b\right)} \, \mathrm{d}b$ steps.
From \eqref{eq:pspeed} we obtain an upper bound for the number of steps until some meridian $b^*$
(for arbitrary number of faults, i.e. for any $\epsilon$):

\begin{equation}
\label{eq:steps}
t_{b^*}
\le \int_0^{b^*} \! \frac{1}{v_{\scriptstyle_\Pi}\left(b\right)} \, \mathrm{d}b
\le \int_0^{b^*} \! \frac{1}{v_{\scriptstyle_G}\sqrt{1-\frac{\sin^2{2A}~\tan^2{b}}{\cos^2{A}+\tan^2{b}}}\cos{A}} \, \mathrm{d}b.
\end{equation}

We note that on the fastest ``no-fault'' route a state will travel exactly at speed $v_{\scriptstyle_G}$,
so that in the same many $t_{b^*}$ steps it can reach at most $\left(v_{\scriptstyle_G}t_{b^*}\right)^{\rm th}$ meridian.

From \eqref{eq:steps} we can derive the upper bound for the distance between the two meridians $b^*$ and $v_{\scriptstyle_G}t_{b^*}$:
\begin{equation}
\label{eq:difference}
v_{\scriptstyle_G}t_{b^*} - b^*
\le \xcancel{v_{\scriptstyle_G}}\int_0^{b^*} \! \frac{1}{\xcancel{v_{\scriptstyle_G}}\sqrt{1-\frac{\sin^2{2A}~\tan^2{b}}{\cos^2{A}+\tan^2{b}}}\cos{A}} \, \mathrm{d}b - b^*
\end{equation}

For example, if we fix $b^* = \frac{3}{8}\pi$, then for any $0 \le A \le 0.1953\ldots\pi$, value \eqref{eq:difference}
does not exceed $\frac{\pi}{4}$. That is, when the fastest state of a quantum ensemble reaches meridian
$b^* + \frac{\pi}{4} = \frac{5}{8}\pi$,
the slowest state of the ensemble with certainty reaches at least meridian $b^* = \frac{3}{8}\pi$.

\subsection{Proof of theorem \ref{thm:search}}
We run standard Grover's algorithm $\frac{5}{4}$ times longer than usually, and then perform a measurement.
\begin{itemize}
\renewcommand{\labelitemi}{$\bullet$}
\item If at most one fault is promised, and $A \le 0.25\pi$,\footnote{$A \le 0.25\pi$ means that there is at least as many non-faulty marked items as faulty marked items. Since we limit our considerations with only one faulty marked item, it suffices with only one non-faulty marked item.} then we use Corollary \ref{cor:onefault} and follow that any component of the resulting mixture $\ket{\psi^*}$ is at most $\frac{\pi}{8}$ far from the meridian ``$\alpha=0$''.
\item If $A \le 0.1953\ldots\pi$,\footnote{For one faulty marked item, it means existence of at least $\lceil \arccot{0.1953\ldots\pi} \rceil = \lceil 1.02047\rceil=2$ non-faulty marked items} then we substitute $b^* = \frac{3}{8}\pi$ in \eqref{eq:difference} and follow exactly the same.

\end{itemize}
Measurement of such $\ket{\psi^*}$ results in finding a marked item (the faulty or a non-faulty one) with probability at least $\cos^{2}\frac{\pi}{8} = 0.853553\ldots$
\begin{itemize}
\renewcommand{\labelitemi}{$\bullet$}
\item Otherwise, if $0.1953\ldots\pi < A \le 0.25\pi$ and there is no promise on the number of faults, it means that there is exactly 1 non-faulty marked element (so $A=0.25\pi$).
\end{itemize}
In this specific case we run standard Grover's algorithm $\approx 1.34$ times longer than usually, and then perform a measurement.
We substitute $A=0.25\pi, b^* \approx 0.33\pi$ in \eqref{eq:difference} and follow that value \eqref{eq:difference} does not exceed $\approx 0.34\pi$.
That is, when the fastest state of a quantum ensemble reaches meridian $b^* + 0.34\pi = 0.67\pi (=0.5 \times 1.34\pi)$, the slowest state of the ensemble with certainty reaches at least meridian $b^* = 0.33\pi$.
Measurement of such $\ket{\psi^*}$ results in finding a marked item (the faulty or a non-faulty one) with probability at least $\cos^{2}{0.17\pi} = 0.74\ldots$.
\qed


\section{Limiting behavior}

In this section we analyze the state of the algorithm after a large number of steps. We consider a density matrix of the state and describe how transformations of the algorithm change the state. We prove an existence of the limiting state and give probabilities of finding a non-faulty marked, as well as faulty marked item then measuring the state.

\subsection*{Proof of theorem \ref{thm:limiting_behaviour}}

Consider the density matrix $\rho_t$ of the quantum state of Grover's algorithm after $t$ queries. 
Due to symmetry, we can assume that the first $k$ basis states correspond to the marked elements. Note that Grover's algorithm acts in the same way on all non-faulty marked elements, as well as on all non-marked elements.
Therefore, the state of the algorithm is a probabilistic mixture of pure states of the form
\begin{equation}
\label{eq:state} 
\alpha \ket{\psi_{+}} + \beta \ket{i_k} + \gamma \ket{\psi_{-}} .
\end{equation}
\
The density matrix $\rho_t$, then, takes the form
\ 
\begin{equation}
\label{eq:rho_t}
\rho_t= \left[ 
\begin{array}{ccccccc}
   a   & \ldots &    a   &    a'  &    c   & \ldots &    c   \\
\vdots & \ddots & \vdots & \vdots & \vdots & \ddots & \vdots \\
   a   & \ldots &    a   &    a'  &    c   & \ldots &    c   \\
   a'  & \ldots &    a'  &    b   &    d'  & \ldots &    d'  \\
   c   & \ldots &    c   &    d'  &    d   & \ldots &    d   \\
\vdots & \ddots & \vdots & \vdots & \vdots & \ddots & \vdots \\
   c   & \ldots &    c   &    d'  &    d   & \ldots &    d
\end{array} 
\right]
\end{equation}

\noindent
because the density matrix for every pure state (\ref{eq:state}) in the mixture $\rho_t$ is of this form. 

Next we consider the effect of the faulty query $Q'$ and the diffusion transformation $D$ on the density matrix $\rho$.


\begin{Lemma}
The effect of the faulty query $Q'$ on the density matrix $\rho_t$ is:

\begin{equation}
\label{eq:fauly_query}
\begin{array}{lcl}
a  & \mapsto & a \\
a' & \mapsto & -(2p - 1) a' \\
b  & \mapsto & b \\
c  & \mapsto & -c \\
d' & \mapsto & (2p - 1) d' \\
d  & \mapsto & d
\end{array} .
\end{equation}

\end{Lemma}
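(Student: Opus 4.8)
The plan is to write the faulty query $Q'$ explicitly as a probabilistic mixture of two diagonal unitaries and then track what conjugation by such a unitary does to each entry of $\rho_t$. With probability $p$ the query applies $U_p$, the diagonal matrix with $-1$ in the $k-1$ positions corresponding to the non-faulty marked elements and $+1$ everywhere else (in particular $+1$ at position $i_k$); with probability $1-p$ it applies the correct oracle $Q$, the diagonal matrix with $-1$ in the first $k$ positions (all marked elements) and $+1$ on the $N-k$ non-marked positions. Hence the action on the density matrix is $\rho_t \mapsto p\,U_p\,\rho_t\,U_p^{\dagger} + (1-p)\,Q\,\rho_t\,Q^{\dagger}$, an affine map, so it suffices to understand each of the two conjugations separately and then take the convex combination.

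Next I would use the elementary fact that if $U = \mathrm{diag}(s_1,\dots,s_N)$ with every $s_r \in \{+1,-1\}$, then $(U\rho U^{\dagger})_{rs} = s_r s_s\,\rho_{rs}$. So for each of the six entry types $a,a',b,c,d',d$ I only need the product of the two relevant sign factors under $U_p$ and under $Q$, and then form the weighted sum with weights $p$ and $1-p$. For example, an $a$-entry (both indices among the first $k-1$) is multiplied by $(-1)(-1)=+1$ in both cases, hence unchanged; an $a'$-entry (one index among the first $k-1$, the other equal to $i_k$) is multiplied by $(-1)(+1)=-1$ under $U_p$ but by $(-1)(-1)=+1$ under $Q$, giving $p(-a')+(1-p)a' = -(2p-1)a'$; the entry $b$ (both indices $i_k$) gets $(+1)(+1)$ and $(-1)(-1)$, so $b\mapsto b$; a $c$-entry gets $(-1)(+1)$ in both cases, so $c\mapsto -c$; a $d'$-entry gets $(+1)(+1)$ under $U_p$ and $(-1)(+1)$ under $Q$, so $d'\mapsto (2p-1)d'$; and a $d$-entry gets $(+1)(+1)$ in both cases, so $d\mapsto d$. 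This is precisely (\ref{eq:fauly_query}).

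Finally I would remark that, since every $a$-entry receives the same factor, every $a'$-entry the same factor, and so on, the image $Q'(\rho_t)$ again has the block form (\ref{eq:rho_t}), so the stated update rule is internally consistent (and the result is automatically a valid density matrix, being a convex combination of unitary conjugates of $\rho_t$). There is essentially no obstacle here beyond careful sign bookkeeping for the three ``mixed'' entry types $a'$, $c$, $d'$; the only subtlety worth highlighting is that $c$ (non-faulty marked vs.\ non-marked) flips under both $U_p$ and $Q$, whereas $d'$ (faulty vs.\ non-marked) flips only under $Q$, which is exactly what produces the asymmetric coefficients $-(2p-1)$ and $+(2p-1)$ in the two cases.
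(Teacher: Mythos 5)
Your proof is correct, and every sign computation checks out against the error model: with probability $p$ the oracle acts as $\mathrm{diag}(-1,\dots,-1,+1,+1,\dots,+1)$ (faulty element $i_k$ left alone) and with probability $1-p$ as the correct $Q$, and conjugating $\rho_t$ by a diagonal $\pm1$ matrix multiplies $\rho_{rs}$ by $s_r s_s$, which yields exactly the six update rules of \eqref{eq:fauly_query}. However, your route is genuinely different from the paper's: the paper gives no derivation at all, simply writing ``Follows from [AB+13, formula (2)]'' and deferring to the analogous density-matrix update computed in that earlier work. What your argument buys is self-containment and verifiability --- the explicit decomposition of $Q'$ into a convex combination of two diagonal unitaries makes it immediately clear why $a'$ and $d'$ acquire the asymmetric factors $-(2p-1)$ and $+(2p-1)$ while $c$ simply flips sign, and your closing remark that the image retains the block form \eqref{eq:rho_t} is a useful consistency check that the paper leaves implicit. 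What the paper's citation buys is brevity and a link to the more general per-element error model of [AB+13], of which this lemma is the special case with a single faulty element. Your proof would stand on its own as a replacement for the citation.
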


\begin{proof}
Follows from \cite[formula (2)]{AB+13}.
\end{proof}


\begin{Lemma}
The effect of the diffusion transformation $D$ on the density matrix $\rho$ is:

\begin{equation}
\label{eq:diffusion}
\rho_{i,j} \mapsto 4V - 2V_{i,\cdot} - 2 V_{\cdot, j} + \rho_{i,j} ,
\end{equation}
where $V = \frac{1}{N^2} \sum_{i,j} \rho_{i,j} $ is the average of all elements of $\rho$, $V_{i,\cdot} = \frac{1}{N} \sum_{j} \rho_{i,j} $ is the average of $i^{th}$ row and $V_{\cdot,j} = \frac{1}{N} \sum_{i} \rho_{i,j} $ is the average of $j^{th}$ column of $\rho$. 
\end{Lemma}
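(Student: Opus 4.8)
The plan is to compute directly how the diffusion matrix $D$ acts on an arbitrary density matrix $\rho$ by conjugation, $\rho \mapsto D\rho D$, and then read off the coordinate formula. First I would record that $D = \frac{2}{N}J - I$, where $J$ is the all-ones $N\times N$ matrix and $I$ is the identity. This is just the statement that $D_{i,j} = \frac{2}{N} - \delta_{i,j}$, matching the matrix displayed in the preliminaries. Since $D$ is a real symmetric (in fact orthogonal) matrix, the density matrix transforms as $\rho \mapsto D\rho D^\dagger = D\rho D$.

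Next I would expand the product $\left(\frac{2}{N}J - I\right)\rho\left(\frac{2}{N}J - I\right)$ into four terms: $\frac{4}{N^2}J\rho J - \frac{2}{N}J\rho - \frac{2}{N}\rho J + \rho$. The key step is to identify each term entrywise. The entry $(i,j)$ of $\rho$ is just $\rho_{i,j}$. The entry $(i,j)$ of $J\rho$ is $\sum_{\ell}\rho_{\ell,j} = N\,V_{\cdot,j}$, so $\frac{2}{N}(J\rho)_{i,j} = 2V_{\cdot,j}$; symmetrically $\frac{2}{N}(\rho J)_{i,j} = 2V_{i,\cdot}$. Finally $(J\rho J)_{i,j} = \sum_{\ell,m}\rho_{\ell,m} = N^2 V$, so $\frac{4}{N^2}(J\rho J)_{i,j} = 4V$. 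Adding these with the correct signs gives exactly $\rho_{i,j} \mapsto 4V - 2V_{i,\cdot} - 2V_{\cdot,j} + \rho_{i,j}$, which is the claimed formula \eqref{eq:diffusion}.

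There is really no serious obstacle here: the whole content is the algebraic identity $D = \frac{2}{N}J - I$ together with the observation that multiplying by $J$ on the left (resp. right) replaces each entry by its column average (resp. row average) times $N$, and multiplying by $J$ on both sides replaces every entry by the global sum. The only points worth stating carefully are that $D$ is symmetric so conjugation is $D\rho D$ rather than $D\rho D^{-1}$ with a distinct $D^{-1}$ (though in fact $D^2 = I$ so it would not matter), and that the averages $V$, $V_{i,\cdot}$, $V_{\cdot,j}$ are defined with the normalizations $1/N^2$, $1/N$, $1/N$ respectively, which is exactly what makes the coefficients come out as $4$, $2$, $2$. I would present the four-term expansion, annotate each term with its entrywise value, and conclude.

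Once this lemma is in hand, the remainder of the proof of Theorem \ref{thm:limiting_behaviour} (not requested here) would combine the two lemmas into a linear recurrence on the six parameters $(a,a',b,c,d',d)$, restricted further by the trace condition $\mathrm{tr}\,\rho_t = (k-1)a + b + (N-k)d = 1$ and the row-average relations forced by the block structure, analyze the resulting finite-dimensional linear map, and show its iterates converge to the fixed point corresponding to $\rho_{lim}$.
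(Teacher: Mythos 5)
Your proof is correct and takes essentially the same route as the paper: a direct entrywise computation of $D\rho D$. The paper organizes it as two successive multiplications (first $\rho D$, giving $2V_{i,\cdot}-\rho_{i,j}$, then left-multiplication by $D$), whereas you expand $D=\frac{2}{N}J-I$ into four terms; this is only a cosmetic difference, and your identification of each term's entrywise value is accurate.
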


\begin{proof}
First we right multiply $\rho = (\rho_{i,j})$ by 
$$ 
D =
\left[
\begin{array}{cccc} 
-1 + \frac{2}{N} & \frac{2}{N} & \ldots & \frac{2}{N} \\
\frac{2}{N} & -1 + \frac{2}{N} & \ldots & \frac{2}{N} \\
\ldots & \ldots & \ldots & \ldots \\
\frac{2}{N} & \frac{2}{N} & \ldots & -1 + \frac{2}{N} 
\end{array} 
\right]
.
$$
\
We get matrix $\rho' = \rho D$ where
\
\begin{equation}
\label{eq:rho_D}
\rho'_{i,j} = \sum_k \rho_{i,k} \frac{2}{N} - \rho_{i,j} =
2V_{i,\cdot} - \rho_{i,j} .
\end{equation}
\
Next we left multiply \eqref{eq:rho_D} by $D$ and get matrix $\rho'' = D \rho D$ where
$$
\rho''_{i,j} = \sum_{k} \frac{2}{N} (2V_{k,\cdot} - \rho_{k,j}) - (2V_{i,\cdot} - \rho_{i,j}) =
$$
$$
= \frac{4}{N} \sum_{k} V_{k,\cdot} - \frac{2}{N} \sum_{k} \rho_{k,j} - 2V_{i,\cdot} + \rho_{i,j} =
$$
$$
4V - 2V_{i,\cdot} - 2V_{\cdot,j} + \rho_{i,j} .
$$
\

\end{proof}


\noindent
As $V_{i,\cdot} = V_{\cdot,i}$ (this follows from the structure of the density matrix $\rho_t$) we can use $V_i$ to denote both $V_{i,\cdot}$ and $V_{\cdot,i}$.

Using the same argument as in \cite{AB+13} we can prove 
$\lim_{t \rightarrow \infty} a' = 0$ and $\lim_{t \rightarrow \infty} d' = 0$.
Consider $t'$ such that for all $t > t'$ inequalities $a' < \epsilon$ and $d' < \epsilon$ hold.
We will prove that at this moment all elements of $\rho_t$ should be $O(\epsilon)$ close to their final values. The proof is done in two steps. First we prove that all $V_i$ can differ by at most $O(\epsilon)$. Next, as a consequence, we derive $\Delta \rho_{i,j} = O(\epsilon)$.


\begin{Claim}
\label{claim:Vi_Vj}
$\forall i,j: V_i - V_j = O(\epsilon)$.
\end{Claim}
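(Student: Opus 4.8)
The plan is to express each $V_i$ explicitly in terms of the entries of $\rho_t$ using the block structure of \eqref{eq:rho_t}, and then use the facts $a' = O(\epsilon)$ and $d' = O(\epsilon)$ together with the total-trace normalization to pin down the differences. There are really only three distinct row-averages: $V_M := V_i$ for $i$ a non-faulty marked index (rows $1,\dots,k-1$), $V_f := V_k$ for the faulty index, and $V_U := V_i$ for $i$ a non-marked index. From \eqref{eq:rho_t} these are
\begin{align*}
V_M &= \tfrac{1}{N}\bigl((k-1)a + a' + (N-k)c\bigr),\\
V_f &= \tfrac{1}{N}\bigl((k-1)a' + b + (N-k)d'\bigr),\\
V_U &= \tfrac{1}{N}\bigl((k-1)c + d' + (N-k)d\bigr).
\end{align*}
So I first want to bound $V_M - V_U$ and $V_f - V_M$, $V_f - V_U$; since all other $V_i$ coincide with one of these three, controlling these three differences gives the claim.

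The key step is to observe that the diffusion step \eqref{eq:diffusion} can only converge if the driving quantities $2V_i$ stabilize, and more concretely that the off-diagonal-block entries $c$ and $d'$ that couple the marked and unmarked blocks are themselves controlled. First I would note $d' = O(\epsilon)$ by hypothesis, so $V_U = \tfrac{1}{N}\bigl((k-1)c + (N-k)d\bigr) + O(\epsilon/N)$ and similarly the $a'$ term drops out of $V_M$. Then I would use that $\rho_t$ is a density matrix: its diagonal sums to $1$, so $(k-1)a + b + (N-k)d = 1$, and positive-semidefiniteness forces the cross terms $|c| \le \sqrt{a\,d}$ etc. The real engine, though, is the combination of the two update rules: apply $Q'$ then $D$ and track how $V_M, V_f, V_U$ evolve, showing the map on the vector $(V_M, V_f, V_U)$ is a contraction toward a one-dimensional fixed set on which all three are equal — modulo error terms of size $O(\epsilon)$ coming from the $a', d'$ perturbations and from the fact that $Q'$ scales $c \mapsto -c$ (so $c$ and hence $V_U, V_M$ oscillate but with the oscillation damped through $D$'s averaging). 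I would make this precise by writing $V_M - V_U$ after one full step as a fixed contraction factor times its previous value plus $O(\epsilon)$, then summing the geometric series to conclude $V_M - V_U = O(\epsilon)$ for $t$ large; the same argument, or a direct substitution using the already-established bound, handles $V_f - V_M$ and $V_f - V_U$.

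The main obstacle I anticipate is making rigorous the claim that the dynamics on $(V_M, V_f, V_U)$ is genuinely contracting rather than merely non-expanding — the diffusion $D$ is unitary, so there is no free contraction, and the effective damping of the differences $V_i - V_j$ must come from the interplay of $D$ with the sign flip $c \mapsto -c$ in $Q'$ (and the $-(2p-1)a'$, $(2p-1)d'$ factors, which is exactly where the Regev–Schiff / \cite{AB+13} mechanism enters). I expect the cleanest route is to mimic \cite{AB+13}: they already establish $a' \to 0$ and $d' \to 0$ by a Lyapunov / monotonicity argument, and the hint in the text ("Using the same argument as in \cite{AB+13}") suggests the intended proof simply plugs $a',d' = O(\epsilon)$ into the explicit formulas for $V_i$ above and observes that once these "coupling" coordinates are small, the remaining block structure of $\rho_t$ is forced (by trace and PSD constraints plus the $D$-fixed-point condition $4V - 2V_i - 2V_j + \rho_{ij} = \rho_{ij}$ in the limit, i.e. $\rho_{ij} \to \rho_{ij}$ forces $2V_i + 2V_j = 4V$ for all $i,j$, hence $V_i = V$ for all $i$) up to $O(\epsilon)$. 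I would present it in that order: recall $a',d' = O(\epsilon)$, write the three $V$'s, invoke the limiting fixed-point relation of \eqref{eq:diffusion} to get $V_i - V_j \to 0$, and quantify the rate via the $\epsilon$-perturbed version to land on $V_i - V_j = O(\epsilon)$.
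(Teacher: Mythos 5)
Your identification of the three distinct row averages $V_M, V_f, V_U$ and their explicit expressions in terms of $a, a', b, c, d', d$ is correct, and the final route you sketch is in the right spirit. But the decisive step --- invoking the ``$D$-fixed-point condition'' to conclude $2V_i + 2V_j = 4V$ for \emph{all} pairs $i,j$ --- has a genuine gap. That condition requires knowing that the entry $\rho_{i,j}$ is approximately stationary under one full step of the evolution, and at this point in the argument the only entries known to be small and to remain small are $\rho_{1,k} = a'$ and $\rho_{k,N} = d'$ (they stay $O(\epsilon)$ for all $t > t'$, hence their per-step changes are $O(\epsilon)$). Entrywise convergence of the whole matrix is essentially the conclusion of the theorem, so you cannot assume it here; moreover, for a cross-block pair such as $(1,N)$ the full-step stationarity condition is not $4V - 2V_1 - 2V_N = 0$ but $4V - 2V_1 - 2V_N = 2c$, because $Q'$ negates $c$ before $D$ acts --- and $c = O(\epsilon)$ is only derived \emph{after} this claim, as a corollary of it. Your first proposed engine (a contraction on $(V_M, V_f, V_U)$) is, as you yourself note, not established, and it is not where the damping actually comes from.

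The paper's proof avoids all of this by applying $\Delta\rho_{i,j} = 4V - 2V_i - 2V_j = O(\epsilon)$ only to the two entries $a'$ and $d'$: subtracting the two resulting relations cancels $4V$ and $2V_k$ and gives $V_1 - V_N = O(\epsilon)$ directly, and then substituting $V = \frac{1}{N}\left((k-1)V_1 + V_k + (N-k)V_N\right) = \frac{1}{N}\left((N-1)V_1 + V_k\right) + O(\epsilon)$ back into $4V - 2V_1 - 2V_k = O(\epsilon)$ yields $\frac{2N-4}{N}(V_1 - V_k) = O(\epsilon)$. To repair your write-up, restrict the stationarity argument to exactly those two entries and finish with this elementary linear algebra; the trace normalization and positive semidefiniteness you invoke are not needed for this claim.
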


\begin{proof}

First, note that $V_1 = \ldots = V_{k-1}$ and $V_{N-k} = \ldots = V_N$. 
This follows from the structure of $\rho_t$.

Next, consider effect of $D$ on elements $\rho_{1,k}$ and $\rho_{k,N}$.
Both elements are $O(\epsilon)$.
Corresponding changes $\Delta \rho_{1,k} = 4V - 2V_1 - 2V_k$ and $\Delta \rho_{k,N} = 4V - 2V_k - 2V_N$ should also be $O(\epsilon)$.
Thus, 
\begin{equation}
\label{eq:V1_VN}
\Delta \rho_{1,k} - \Delta \rho_{k,N} = 2V_1 - 2V_N = O(\epsilon)
\end{equation}
from which follows $V_1 - V_N = O(\epsilon)$.

Now consider $\Delta \rho_{1,k} = 4V - 2V_1 - 2V_k = O(\epsilon)$.
By using \eqref{eq:V1_VN} we can write
$$
V = \frac{1}{N} ((k-1) V_1 + V_k + (N-k) V_N) = \frac{1}{N} ((N-1) V_1 + V_k + O(\epsilon)) . 
$$
Putting this into $\Delta \rho_{1,k}$ and opening brackets we get
$$
\frac{2N - 4}{N} V_1 - \frac{2N - 4}{N} V_k = O(\epsilon)
$$
from which follows $ V_1 - V_k = O(\epsilon) $.

\end{proof}


\begin{corollary}
$D$ changes elements of $\rho_t$ by at most $O(\epsilon)$.
\end{corollary}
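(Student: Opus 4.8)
The plan is to bootstrap from Claim~\ref{claim:Vi_Vj} via the explicit diffusion formula \eqref{eq:diffusion}. By Claim~\ref{claim:Vi_Vj} every row/column average $V_i$ differs from a common value $\bar V$ by $O(\epsilon)$, and since $V = \frac{1}{N}\sum_i V_i$ is itself an average of the $V_i$, we also have $V - V_i = O(\epsilon)$ for every $i$. First I would substitute these two facts into the diffusion update $\rho_{i,j} \mapsto 4V - 2V_{i,\cdot} - 2V_{\cdot,j} + \rho_{i,j}$, so that the increment is $\Delta\rho_{i,j} = 4V - 2V_i - 2V_j = 2(V - V_i) + 2(V - V_j) = O(\epsilon)$. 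That is exactly the claimed bound, and it holds uniformly over all entries $i,j$.

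The one point that needs a word of care — and which I expect to be the only real obstacle — is making sure the $O(\epsilon)$ constants are genuinely uniform in $t$ (for $t > t'$) and not secretly growing. The quantities $a,b,c,d$ are bounded (they are entries of a density matrix, hence in $[-1,1]$), so $V$, $V_i$ and their differences are all $O(1)$ a priori; Claim~\ref{claim:Vi_Vj} upgrades the \emph{differences} $V_i - V_j$ to $O(\epsilon)$ with an absolute constant coming from the algebra in its proof (the factors $\frac{2N-4}{N}$, etc., are bounded away from $0$ for $N \geq 3$). Since $V$ is a convex combination of the $V_i$, $|V - V_i| \le \max_{j}|V_j - V_i| = O(\epsilon)$ with the same constant. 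Hence $|\Delta\rho_{i,j}| \le 4\max_{i,j}|V - V_i| = O(\epsilon)$, with a constant independent of $t$ and of which entry we look at.

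So the proof is essentially one line once Claim~\ref{claim:Vi_Vj} is in hand: write $\Delta\rho_{i,j} = 2(V - V_i) + 2(V - V_j)$, bound each term by the claim, and conclude $\Delta\rho_{i,j} = O(\epsilon)$ for every $(i,j)$. I would present it in that order: (i) recall the diffusion formula, (ii) note $V - V_i = O(\epsilon)$ as a corollary of Claim~\ref{claim:Vi_Vj} plus $V = \frac1N\sum_i V_i$, (iii) plug in and collect. The payoff, which I would flag but not prove here, is that combined with the query lemma \eqref{eq:fauly_query} — where $a \mapsto a$, $b \mapsto b$, $d \mapsto d$ exactly, and $c \mapsto -c$, $a' \mapsto -(2p-1)a'$, $d' \mapsto (2p-1)d'$ are all $O(\epsilon)$ moves once $a',d' < \epsilon$ — one full step $DQ'$ perturbs $\rho_t$ by only $O(\epsilon)$, which is the input needed to pin down $\rho_{lim}$ and identify $a=b=d=\frac13$ at the fixed point.
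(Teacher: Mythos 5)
Your argument is correct and is essentially the paper's own proof: both use Claim~\ref{claim:Vi_Vj} together with $V=\frac1N\sum_i V_i$ to conclude $4V-2V_i-2V_j=O(\epsilon)$ (the paper substitutes $V=\frac1N(NV_i+O(\epsilon))$ where you write $2(V-V_i)+2(V-V_j)$, which is the same computation). Your extra remarks on uniformity of the constants are a reasonable bonus but not a departure from the paper's route.
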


\begin{proof}
The change of element $\rho_{i,j}$ is equal to $\Delta \rho_{i,j} = 4V - 2V_i - 2V_j$. 
Using claim \ref{claim:Vi_Vj} we can rewrite this equation as
$$
\Delta \rho_{i,j} = \frac{4}{N}(N V_i + O(\epsilon)) - 4V_i + O(\epsilon) = O(\epsilon) .
$$

\end{proof}


\begin{corollary}
$c = O(\epsilon)$.
\end{corollary}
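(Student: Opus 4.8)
The plan is to read the behaviour of the entry $c$ off the two preceding lemmas. The faulty query $Q'$ sends $c \mapsto -c$, and, once we are past the threshold $t'$ at which $|a'|,|d'|<\epsilon$, the preceding corollary shows that one application of the diffusion $D$ moves every entry of the (block-structured) density matrix — in particular $c$ — by at most $O(\epsilon)$; this still applies right after a query, since $Q'$ only shrinks $a'$ and $d'$ in absolute value. Composing one query with one diffusion therefore gives, for every $t>t'$,
\[
c_{t+1} \;=\; -\,c_t \;+\; O(\epsilon),
\]
i.e. $c_t + c_{t+1} = O(\epsilon)$: consecutive values of $c$ are, up to $O(\epsilon)$, negatives of each other.

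From this I would finish as follows. Since the sequence $\rho_1,\rho_2,\dots$ converges (the content we are establishing, the entries $a'$ and $d'$ already tending to $0$ by the argument of \cite{AB+13} used above), the entry $c_t$ has a limit $c_\infty$; letting $t\to\infty$ in the displayed relation gives $2c_\infty = O(\epsilon)$, hence $c_\infty = O(\epsilon)$, and since $\epsilon$ was an arbitrary positive number (the threshold $t'$ depends on it) we conclude $c_\infty = 0$. Thus for large $t$ the entry $c$ is $O(\epsilon)$-close to its final value $0$, which is the claim. The same bookkeeping then handles the diagonal entries: $Q'$ fixes $a,b,d$ and $D$ perturbs them by $O(\epsilon)$, so together with the trace identity $(k-1)a + b + (N-k)d = 1$ and Claim~\ref{claim:Vi_Vj} one pins down $b\to\tfrac13$, $a\to\tfrac1{3(k-1)}$, $d\to\tfrac1{3(N-k)}$, and reassembles $\rho_{lim}$.

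The delicate point is the passage from the per-step identity $c_{t+1} = -c_t + O(\epsilon)$ to a bound $|c_t| = O(\epsilon)$ holding uniformly in $t$: the map $c\mapsto -c$ is only neutrally (not strictly) contracting, so a naive iteration merely allows $c_t$ to drift by $O(\epsilon)$ each step rather than staying $O(\epsilon)$. Closing this gap really does need the convergence of $\rho_t$ itself — cleanly, one would work inside the three-dimensional subspace spanned by $\ket{\psi_+}$, $\ket{i_k}$, $\ket{\psi_-}$, on which $DQ'$ is a genuine contraction toward its unique fixed point — and that is exactly the ingredient imported from \cite{AB+13}. Everything else is the routine algebra of pushing the six numbers $a,a',b,c,d,d'$ through the two lemmas.
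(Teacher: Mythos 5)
Your reduction to the recurrence $c_{t+1} = -c_t + O(\epsilon)$ does not close, and the gap you flag at the end is fatal rather than cosmetic. There are in fact two circularities. First, to obtain the $O(\epsilon)$ error term you apply the corollary ``$D$ changes elements of $\rho_t$ by at most $O(\epsilon)$'' to the \emph{post-query} state $Q'\rho_t Q'^{\dagger}$. That corollary rests on Claim~\ref{claim:Vi_Vj}, i.e.\ on $V_i - V_j = O(\epsilon)$; but $Q'$ does not merely shrink $a'$ and $d'$ --- it negates $c$, which shifts the row averages $V_1$ and $V_N$ by amounts proportional to $(N-k)c$ and $(k-1)c$. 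So Claim~\ref{claim:Vi_Vj} is known to survive the query only if $c$ is already known to be $O(\epsilon)$, which is exactly the statement being proved. Second, even granting the recurrence, $c \mapsto -c$ is an isometry, so you need convergence of $c_t$ to extract a bound; but convergence of $\rho_t$ is the \emph{conclusion} of Theorem~\ref{thm:limiting_behaviour}, not an available hypothesis --- the only fact imported from \cite{AB+13} is $a', d' \to 0$. Your proposed repair (that $DQ'$ acts as a strict contraction with a unique fixed point on the span of $\ket{\psi_{+}}$, $\ket{i_k}$, $\ket{\psi_{-}}$) is asserted rather than proved, and it is not what \cite{AB+13} supplies; were it available, the whole limiting analysis would be immediate.

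The paper's argument avoids both problems by working with row averages instead of with the entry $c$ itself. Claim~\ref{claim:Vi_Vj} holds at every time $t > t'$, and since $D$ sends $V_i \mapsto 2V - V_i$ and therefore preserves $|V_i - V_j|$, the claim must also hold for the intermediate state just after the query; hence $Q'$ cannot move $V_1 - V_k$ by more than $O(\epsilon)$. But $Q'$ fixes $a$, perturbs $a'$ and $d'$ by $O(\epsilon)$, and negates $c$, so its effect on $V_1 - V_k$ is $-\tfrac{2(N-k)}{N}c + O(\epsilon)$. This forces $c = O(\epsilon)$ in one step, with no recurrence and no appeal to convergence. To salvage your route you would have to establish the contraction property independently; as written, the proposal assumes what it must prove.
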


\begin{proof}
Consider the effect of $Q'$ on $V_1 = (k-1)a + (N-k)c + O(\epsilon)$.
$Q'$ leaves $a$ unchanged, but changes the sign of $c$. 
As the claim \ref{claim:Vi_Vj} should hold for all subsequent steps, $Q'$ can not change value of $V_1$ by more than $O(\epsilon)$.
Thus, $V_1(c) - V_1(-c)$ should be $O(\epsilon)$, which means that $c = O(\epsilon)$.

\end{proof}


\begin{corollary}
$Q'$ changes elements of $\rho_t$ by at most $O(\epsilon)$.
\end{corollary}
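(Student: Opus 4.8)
The plan is to read off the action of the faulty query from the Lemma giving \eqref{eq:fauly_query} and to check the six types of entries of $\rho_t$ one at a time, each time using a bound that has already been established. The point is that $Q'$ acts \emph{diagonally} on the six parameters: it fixes $a$, $b$ and $d$, and it multiplies $a'$, $d'$ and $c$ by scalars of absolute value at most $1$ (namely $-(2p-1)$, $(2p-1)$ and $-1$, all lying in $[-1,1]$). So the change caused in any entry is either $0$ or a bounded multiple of one of $a'$, $d'$, $c$.

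The entries of type $a$, $b$, $d$ are invariant under $Q'$, so here the change is exactly $0$. For the entries of type $a'$ and $d'$: we are working at a step $t > t'$, where by the choice of $t'$ we have $|a'| < \epsilon$ and $|d'| < \epsilon$; hence $a' \mapsto -(2p-1)a'$ and $d' \mapsto (2p-1)d'$ move these entries by at most $2|a'| = O(\epsilon)$ and $2|d'| = O(\epsilon)$ respectively. For the entries of type $c$: the previous corollary gives $c = O(\epsilon)$, so the sign flip $c \mapsto -c$ changes such an entry by $2c = O(\epsilon)$. Combining the six cases, every entry of $\rho_t$ is moved by at most $O(\epsilon)$ under $Q'$, which is the claim.

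There is no real obstacle in this step; the only care needed is in the logical ordering, since the bound on the $c$-type entries relies on the preceding corollary $c = O(\epsilon)$ and the bounds on the $a'$- and $d'$-type entries rely on having passed the threshold $t'$. Taken together with the earlier corollary that $D$ also changes every entry of $\rho_t$ by $O(\epsilon)$, this shows that a single step $DQ'$ perturbs $\rho_t$ by only $O(\epsilon)$; since $\epsilon>0$ was arbitrary and $a', d' \to 0$, a standard Cauchy-sequence argument over the remaining steps then yields convergence of $\rho_t$, and identifying the fixed point of the combined update on $(a,a',b,c,d',d)$ gives the stated $\rho_{lim}$.
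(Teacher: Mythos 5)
Your proof is correct and is essentially the same argument the paper intends: the paper simply states that the corollary ``trivially follows from \eqref{eq:fauly_query} and the previous corollary,'' and you have spelled out exactly that reasoning --- $Q'$ fixes $a$, $b$, $d$ and rescales $a'$, $d'$, $c$ by factors in $[-1,1]$, so the entry-wise change is bounded by $2|a'|$, $2|d'|$, $2|c|$, each of which is $O(\epsilon)$ by the choice of $t'$ and the preceding corollary. The closing remarks about assembling the convergence argument go beyond what this corollary asserts but are consistent with how the paper proceeds.
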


\begin{proof}
Trivially follows from \eqref{eq:fauly_query} and the previous corollary.
\end{proof}


We can chose $\epsilon$ to be arbitrary small, thus, we have
\ 
$$
\rho_{lim} = 
\left[ 
\begin{array}{ccccccc}
   a   & \ldots &    a   &    0   &    0   & \ldots &    0   \\
\vdots & \ddots & \vdots & \vdots & \vdots & \ddots & \vdots \\
   a   & \ldots &    a   &    0   &    0   & \ldots &    0   \\
   0   & \ldots &    0   &    b   &    0   & \ldots &    0   \\
   0   & \ldots &    0   &    0   &    d   & \ldots &    d   \\
\vdots & \ddots & \vdots & \vdots & \vdots & \ddots & \vdots \\
   0   & \ldots &    0   &    0   &    d   & \ldots &    d
\end{array} 
\right]
.
$$
As $(k-1)a = b = (N-k)d$ (follows from $V_1 = V_k = V_N$) and $(k-1)a + b + (N-k)d = 1$ (property of density matrix) we have $(k-1)a = b = (N-k)d = \frac{1}{3}$ which completes the proof of the theorem.

\qed


\section{Summary and open problems}

In this paper we focus on the case where search space contains multiple non-faulty and one faulty marked element. 

First we show that if there are at least two non-faulty marked elements or it there is at most one fault, then it is still possible to find a non-faulty marked element in $O(\sqrt{N})$ queries with $\Theta(1)$ probability.

Second, we analyze the limiting behavior of the algorithm for a large number of steps and show the existence and structure of limiting state $\rho_{lim}$.

It is an open question to generalize the results for more than one faulty marked element. Although the generalization of Theorem \ref{thm:limiting_behaviour} seems to be straightforward it is not given here. The generalization of Theorem \ref{thm:search} might be tricky as one needs to deal with hyper-spherical geometry.
One simple case, in which the theorem still holds for multiple faulty marked elements, is synchronized-fault-case, where each query either fails or succeeds on all faulty marked elements.



\end{document}